\newcommand{\Cl}{\mathop{\hbox{\rm Cl}}\nolimits}
\newcommand{\bz}{\mathbb{Z}}
\newcommand{\br}{\mathbb{R}}
\newcommand{\zt}{\bz_2}
\newcommand{\rp}{\br\mathbb{P}}
\newcommand{\ch}{\mathcal{H}}
\newcommand{\wt}{\mathop{\text{\rm wt}}\nolimits}
\newcommand{\Hom}{\mathop{\text{\rm Hom}}\nolimits}
\newcommand{\Repl}{\mathop{\text{\rm Repl}}\nolimits}
\newtheorem{algorithm}{Algorithm}
\newtheorem{theorem}{Theorem}[section]
\newtheorem{definition}{Definition}[section]
\newtheorem{proposition}[theorem]{Proposition}
\newtheorem{corollary}[theorem]{Corollary}
\newtheorem{lemma}[theorem]{Lemma}
\newtheorem{example}[theorem]{Example}
\newcommand{\ack}{{\bf Acknowledgments}\\}
\begin{document}
 \title[An Application of Cubical Cohomology to Adinkras]{An application of Cubical Cohomology to Adinkras and Supersymmetry Representations}
 \author{C.F. Doran}
 \email{doran@math.ualberta.ca}
 \address{Department of Mathematical and Statistical Sciences,
 University of Alberta,
 632 CAB,
 Edmonton, AB T6G 2G1, Canada}

 \author{K.M. Iga}
 \email{kiga@pepperdine.edu}
 \address{Natural Science Division,
      Pepperdine University,
      Malibu, CA 90263}
      
 \author{G.D. Landweber}
 \email{gregland@bard.edu}
 \address{Mathematics Department,
     Bard College,
     Annandale-on-Hudson, NY 12504-5000}




\keywords{cubical cohomology, supersymmetry, adinkras, signed graphs.}

\begin{abstract}
An Adinkra is a class of graphs with certain signs marking its vertices and edges, which encodes off-shell representations of the super Poincar\'e algebra.  The markings on the vertices and edges of an Adinkra are cochains for cubical cohomology.  This article explores the cubical cohomology of Adinkras, treating these markings analogously to characteristic classes on smooth manifolds.
\end{abstract}


\maketitle

\ack

This project arose out of questions that arose in discussions with M.~Faux, S.~J.~Gates, and T.~H\"ubsch, and consequent discussions with them helped in clarifying the presentation.  T.~H\"ubsch especially spent a lot of time giving careful and constructive comments on the paper that were very helpful.  Y.~Zhang, a graduate student of R.~Stanley at MIT, independently found the usefulness of describing the dashings as a 1-cochain, and discussions with him were fruitful.  He also made a number of helpful suggestions on notation.  He has written another paper that uses these ideas for a different purpose\cite{Yan}.

\section{Introduction}

Supersymmetry is a feature of some theories of interest in high energy particle physics.  It posits that the physical fields in nature form a representation of an algebra called the $N$-extended super Poincar\'e algebra, which is defined given two positive integers $N$ and $D$ (the number $D$ represents the dimension of space-time)\cite{WB}.

In recent years it has become apparent that even for $D=1$, the representation theory is surprisingly intricate\cite{rBG,rCRT,rA,rGR-1,rGR1,rKRT,rPT,rT01,rT01a,rT06}.  In 2004, M. Faux and S. J. Gates introduced a graphical diagram called an {\em Adinkra} to describe some commonly studied representations of the super Poincar\'e algebra in $D=1$ dimensions\cite{rA}.  These diagrams are directed graphs, where the vertices and edges have certain markings and colorings.  An Adinkra is similar in spirit to the concept of a Cayley graph or a Schreier graph in combinatorial group theory (see, for instance, Sec. 1.6 of \cite{mks}).  These Adinkra diagrams have broadened our knowledge of the representation theory of the super Poincar\'e algebra in one dimension\cite{r6-1,r6-2,r6-4,r6-forevery,rAT0,rKT07,toppan2011chiral}.

In this paper, we will show how a certain combinatorial cohomology theory on the Adinkra, called cubical cohomology, can be used to determine which kinds of markings are possible.

\subsection{Marked Graphs and Adinkras}
To define an Adinkra, it is first necessary to describe the kinds of markings to be put on vertices and edges.

A directed graph consists of $V$, a finite set of vertices, and a set of directed edges $E\subset V\times V$.  If $e$ is an edge, and $e=(v,w)$, then we say $e$ points from $v$ to $w$, and $e$ is incident to both $v$ and $w$.

A {\em bipartition} is a partition of $V$ into $V_0\sqcup V_1$, so that each edge of the graph is incident with one element of $V_0$ and one element of $V_1$.  The elements of $V_0$ are called {\em bosons} and the elements of $V_1$ are called {\em fermions}.  We draw the elements of $V_0$ as empty circles, and the elements of $V_1$ as filled circles.  A graph together with a bipartition is called a {\em bipartite graph}.

Fix a positive integer $N$.  We choose $N$ colors, one for each of the numbers in $\{1,\ldots,N\}$.  An {\em edge coloring} is a mapping $\chi:E\to \{1,\ldots,N\}$.  We draw each edge $e\in E$ using the color corresponding to $\chi(e)$.

A {\em dashing} is a mapping $\mu:E\to \{0,1\}$.  The edge $e\in E$ is drawn with a solid line if $\mu(e)=0$ and drawn with a dashed line if $\mu(e)=1$.\footnote{This is equivalent to the notion of a {\em signed graph}, defined by Frank Harary\cite{Harary}, where $+1$ and $-1$ are used instead of $0$ and $1$.  The difference is that we will use additive notation instead of multiplicative notation for the group of order 2.  The relationship of dashed edges with the supersymmetry representation actually suggests the multiplicative notation, and it may be that future work will motivate a switch to this notation, but since the purpose of this paper is to establish the relationship to cohomology, the additive notation will be used.}

In this paper, we use the term {\em marked graph} to mean a directed graph, together with a bipartition, an edge coloring, and a dashing.  An example of a marked graph is Example~\ref{ex:n2}.

Given a marked graph and a subset $J \subset\{1,\ldots,N\}$, we can take the subgraph given by the same vertex set, but for the edge set take $E'=\chi^{-1}(J)$; that is, the edges whose colors are in $J$.  The bipartition of the vertices will be the same and the colorings and dashing of the edges will be restricted to $E'$.  Connected components of this marked subgraph will be called $J$-color-faces of the marked graph.  If $k$ is an integer with $0\le k\le N$, a $k$-face is a $J$-color-face where $\#J=k$ (the cardinality of a set $J$ is denoted by $\#J$).

A {\em circuit} is a connected marked graph where each vertex is incident to exactly two edges.  A {\em quadrilateral} is a circuit with 4 vertices.

An {\em Adinkra} is a marked bipartite graph with the following properties\cite{rA}:

\begin{enumerate}
\item (Color-regular) For every vertex $v$ and color $j\in \{1,\ldots,N\}$, there is a unique edge incident with $v$ of color $j$.
\item (Square) Every 2-face is a quadrilateral.
\item (Non-escheric) Suppose $f$ is a 2-face.  In traversing the boundary of $f$ in either direction, the same number of edges are oriented along the direction of the path as are oriented against the path.\footnote{The term {\em non-escheric} is a reference to what was called an ``escheric Adinkra'' in \cite{rA} that satisfied a modified super Poincar\'e algebra with central charge, which itself is a reference to the M.~C.~Escher drawing, {\em Ascending and Descending}.}
\item (Odd) Every 2-face has an odd number of dashed edges.\footnote{In the language of Harary\cite{Harary}, every 2-face is negative.}
\end{enumerate}

\begin{example}\label{ex:n2}
Consider the following Adinkra for $N=3$:
\begin{center}
\begin{picture}(100,100)(-5,-5)
\put(0,0){\includegraphics[height=80pt]{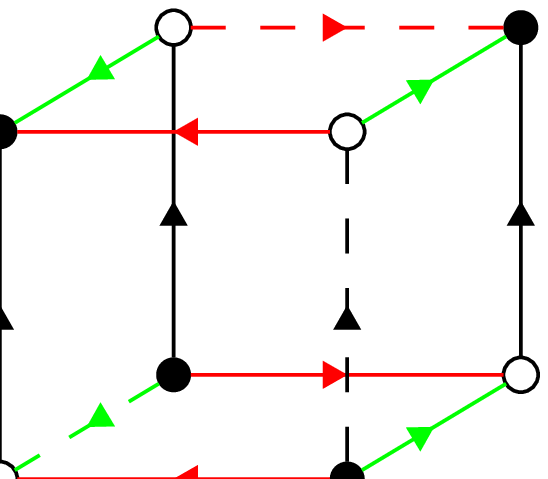}}
\put(-4,-6){\makebox[0in][r]{$v_1$}}
\put(-4,64){\makebox[0in][r]{$v_2$}}
\put(64,-6){\makebox[0in][l]{$v_3$}}
\put(66,58){\makebox[0in][l]{$v_4$}}
\put(18,22){\makebox[0in][l]{$v_5$}}
\put(19,82){\makebox[0in][l]{$v_6$}}
\put(86,12){\makebox[0in][l]{$v_7$}}
\put(86,82){\makebox[0in][l]{$v_8$}}
\end{picture}
\end{center}
\end{example}

Given an Adinkra, and a vertex $v$ in the Adinkra, a {\em vertex switch} produces another Adinkra identical to the first, except that every edge incident with $v$ changes dashedness; i.e., if it was dashed, it becomes solid, and vice versa.  A vertex switch does not change the oddness of the dashing since every 2-face either does not contain the vertex being switched (in which case nothing happens to the edges in it) or it does, in which case two edges are switched.  Therefore, the result is an Adinkra.

\begin{example}
If we do a vertex switch on $v_4$ in the previous example, we get the following Adinkra.  Note that each 2-face still has an odd number of dashed edges.
\begin{center}
\begin{picture}(100,100)(-5,-5)
\put(0,0){\includegraphics[height=80pt]{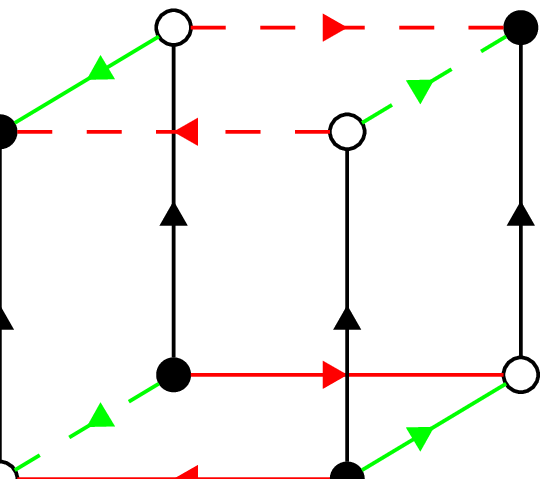}}
\put(-4,-6){\makebox[0in][r]{$v_1$}}
\put(-4,64){\makebox[0in][r]{$v_2$}}
\put(64,-6){\makebox[0in][l]{$v_3$}}
\put(66,58){\makebox[0in][l]{$v_4$}}
\put(18,22){\makebox[0in][l]{$v_5$}}
\put(19,82){\makebox[0in][l]{$v_6$}}
\put(86,12){\makebox[0in][l]{$v_7$}}
\put(86,82){\makebox[0in][l]{$v_8$}}
\end{picture}
\end{center}
\end{example}

\subsection{Purpose of this paper}
The purpose of this paper is to characterize the odd dashings on an Adinkra, up to vertex switches.  The machinery to understand this is cubical cohomology, which is a combinatorial counterpart to simplicial cohomology, but using cubes instead of simplices.

This paper fits into a program to classify all Adinkras, as follows.  Since the disjoint union of Adinkras is an Adinkra, it suffices to describe connected Adinkras.  If we ignore the direction of the edges and the dashings, so that we are considering only the underlying undirected graph with the edge colorings, then these are obtained by taking the quotient of a cube by a binary doubly-even code\cite{rAT0}.  The ways of assigning the directions of the edges were explored in the ``Hanging Gardens'' theorem\cite{r6-1}.  This paper solves the problem of identifying the ways in which the dashings may be chosen, up to vertex switches.

\subsection{Review of Codes}\label{sec:codes}
In this paper all codes will be binary linear block codes of length $N$.  That is, they are linear subspaces of $(\zt)^N$ as vector spaces over $\zt$.  We review a few basic concepts in the theory of codes to establish notation and conventions.  The interested reader can find a thorough introduction to the theory of codes in \cite{HP}.

If $C\subset (\zt)^N$ is a code, then elements of $C$ are called {\em codewords}.  If a codeword $w=(w_1,\ldots,w_N)\in(\zt)^N$, the {\em weight} of $w$, denoted $\wt(w)$, is the number of $j$ with $w_j=1$.  An {\em even} code is one where all codewords have even weights.  A {\em doubly even} code is one where all codewords have weights divisible by $4$.

A basis for $C$ is called a {\em generating set}.  A generating set is not unique, but all generating sets have the same cardinality: the {\em dimension} of the code, denoted as $k$, which is the dimension of $C$ as a vector space.    If a code is a binary linear block code of length $N$ of dimension $k$, it is called an $[N,k]$ code.

A generating set is often arranged as rows of a matrix, called the {\em generating matrix}.  Given two generating matrices for $C$, there is a sequence of row operations taking one generating matrix to another.

\section{Cubes}
Throughout this paper, we will be using cohomology with $\zt$-coefficients.  Thus we can ignore orientation issues.

The $N$-dimensional cube (also called an $N$-cube) is the space $I^N=[0,1]^N$.  It has a natural CW complex structure with $3^N$ different cells, one for every element of the set $\{0,*,1\}^N$.  The $*$ is a formal symbol that represents the interval $[0,1]$.  More specifically, if $(x_1,\ldots,x_N)\in\{0,*,1\}^N$, define
\[
Y_j=
\begin{cases}
[0,1],&\mbox{if $x_j=*$}\\
\{x_j\},&\mbox{if $x_j\not=*$.}
\end{cases}
\]
Then the geometric realization of $(x_1,\ldots,x_N)$ is $\prod_{j=1}^N Y_j$.  Let $J=\{j\,|\,x_j=*\}$.  The cardinality of $J$ is the dimension of the cell.  For instance, if $N=5$, then $(*,1,*,0,1)$ represents a 2-dimensional cell with $J=\{1,3\}$ and has geometric realization $[0,1]\times\{1\}\times[0,1]\times\{0\}\times\{1\}$.

Given a cell $(x_1,\ldots,x_N)\in\{0,*,1\}^N$, an integer $j\in \{0,\ldots,N\}$, and $\alpha\in\{0,*,1\}$, we define the replacement operator
\[
\Repl_{j,\alpha}(x_1,\ldots,x_N) = (x_1,\ldots,x_{j-1},\alpha,x_{j+1},\ldots,x_N).
\]

There is also an undirected graph whose vertices are the 0-cells (that is, the elements of $\{0,1\}^N$) and whose edges are the 1-cells (those elements of $\{0,*,1\}^N$ that have exactly one $*$).  If $(x_1,\ldots,x_N)$ is an edge, and $j$ is the unique coordinate so that $x_j=*$, then this edge connects the vertices $\Repl_{j,0}(x_1,\ldots,x_N)$ and $\Repl_{j,1}(x_1,\ldots,x_N)$.  We color such an edge with color $j$.  More generally, if $(x_1,\ldots,x_N)$ is any element of $\{0,*,1\}^N$, let $J=\{x_j|x_j=*\}$.  Then $(x_1,\ldots,x_N)$ is a $J$-color-face.  We can form a bipartition on the vertices by saying that $(x_1,\ldots,x_N)$ is a boson if and only if an even number of the $x_j$ are $1$.

Given a subset $J\subset\{1,\ldots,N\}$, define $\Psi_J$ to be the unique monotonic bijection from $\{1,\ldots,\#J\}$ to $J$.  Given a $J$-color-face $f$, and $n\in\{1,\ldots,\#J\}$, and $\alpha\in\{0,1\}$, define $\partial_n^\alpha f = \Repl_{\Psi_J(n),\alpha}(f)$.  If $m<n$, then
\[
\partial_m^\alpha \circ \partial_n^\beta = \partial_{n-1}^\beta \circ \partial_m^\alpha.
\]

In this way, the $N$-cube $I^N$ can be viewed as a cubical set, in analogy with simplicial sets.  Cubical sets appear in the work of Kan\cite{Kan}, though in those papers, the definition included degeneracy maps, and here those degeneracy maps are not needed.

For each integer $k$ define the group of $k$-chains $C_k(I^N)$ to be the free $\zt$-vector space spanned by the $k$-dimensional cells of $I^N$.  If $\sigma$ is a $k$-face, we define
\[\partial \sigma = \sum_{n=1}^k \sum_{\alpha=0}^1 \partial_n^\alpha(\sigma)\]
This operator is extended linearly to all of $C_k(I^N)$ for each $k$ (for $\sigma\in C_0(I^N)$ we define $\partial\sigma = 0$).  It is straightforward to show that $\partial\partial c= 0$ for all $c$.

It was proved in \cite{rAT0} that every connected Adinkra is a quotient of the graph of $I^N$ by a binary linear code $C$ of length $N$.  Specifically, the quotients are realized by compositions of reflection maps of the following type.

For every $1\le j\le N$, define the reflection map $\rho_j:\{0,*,1\}^N\to\{0,*,1\}^N$:
\[
\rho_j(x_1,\ldots,x_N)=\left\{\begin{array}{ll}
(x_1,\ldots,x_N),&x_j=*\\
\Repl_{k,1-x_j}(x_1,\ldots,x_N),&x_j\not=*.
\end{array}\right.
\]

Let $C$ be a code of length $N$.  For each $(t_1,\ldots,t_N)\in C$, we define
\[
\rho_{(t_1,\ldots,t_N)}=\rho_1^{t_1}\circ\cdots\circ \rho_N^{t_N}.
\]
In this way, $C$ acts on the cells of $I^N$, and in particular, its 1-skeleton.  The result in \cite{rAT0} is that every connected Adinkra has as its graph the quotient of the 1-skeleton of $I^N$ by a code $C$.  The map $\rho_x$ preserves the bipartition if and only if the weight of $x$ is even, so the quotient of an $N$-cube by a code will inherit the bipartition if and only if the code is even.  It was proven in \cite{rAT0} that a quotient of an $N$-cube by a code can be an Adinkra if and only if that code is doubly even.

The complex of cells in the quotient has the structure of a CW complex which we call a {\em cubical complex}.  Since the $\rho_x$ maps do not commute with the $\partial_k^\alpha$ (they may send $\partial^0$ to $\partial^1$, for instance), the result is not a cubical set, but as a CW complex, it is well-defined.

For every word $x\in C$ we extend $\rho_x$ linearly to a map $(\rho_x)_\#$ on chains.  Then it is straightforward to check that
\[
(\rho_x)_\#\circ \partial (\sigma) = \partial \circ (\rho_x)_\# (\sigma)
\]
by checking it explicitly for an arbitrary $k$-cell $\sigma$.  Thus, we can define chain groups $C_k(A)=C_k(I^N)/C$, and boundary homomorphisms $\partial:C_k(A)\to C_{k-1}(A)$.  Define the group of cycles $Z_k(A)$ to be the kernel of $\partial:C_k(A)\to C_{k-1}(A)$, and the group of boundaries $B_k(A)$ to be the image of $\partial:C_{k+1}(A)\to C_k(A)$.  Define the cubical homology of $A$ to be
\[
H_k(A)=Z_k(A)/B_k(A).
\]

Similarly, we define the cochain groups $C^k(A)=\Hom(C_k,\zt)$, or equivalently, the set of mappings from the set of $k$-cells of $A$ to $\zt$, with the structure of a $\zt$-vector space.  Define $d:C^k\to C^{k+1}$ so that $d\omega(c)=\omega(\partial c)$.  Again, $d^2=0$ and we define cocycle groups $Z^k(A)$ to be kernels of $d$, and coboundary groups $B^k(A)$ to be images of $d$.  Define the cubical cohomology of $A$ to be
\[
H^k(A)=Z^k(A)/B^k(A).
\]

Note that the definitions of $H_k(A)$ and $H^k(A)$ do not depend on the bipartition of the vertices or the coloring of the edges of $A$.

Given an Adinkra $A$, we define the following special cochains $\omega_k\in C^k(A)$: if $f$ is any $k$-face, let $\omega_k(f)=1$.  We now show that $d\omega_k=0$.  Let $\sigma$ be any $J$-cell with $|J|=k+1$.  We write (in $\zt$)
\begin{eqnarray*}
\nonumber
d\omega_k(\sigma) &=& \omega_k(\partial \sigma)\\
&=&\sum_{j=1}^k\sum_{\alpha=0}^1 \omega_k(\partial_i^\alpha\sigma)\\
&=&\sum_{j=1}^k\sum_{\alpha=0}^1 1\\
&=&2k\\
&=&0\pmod{2}.
\end{eqnarray*}
Therefore $\omega_k\in Z^k(A)$, and we let $w_k=[\omega_k]\in H^k(A)$ be its cohomology class.

\section{Dashed edges}
\label{sec:dash}
A dashing on a graph can be recorded by a 1-cochain $\mu\in C^1(A)$ that assigns 1 to every dashed edge and 0 to every solid edge.  

\begin{theorem}
Let $A$ be a quotient of a cube by a code.  There exists an odd dashing on $A$ if and only if $w_2=0$.\label{thm:w2}
\end{theorem}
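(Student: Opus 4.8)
The plan is to recognize that the theorem is, at heart, an unwinding of the definitions of coboundary and cohomology class: the only genuine content is translating the combinatorial ``oddness'' condition on a dashing into the cohomological statement $\omega_2 \in B^2(A)$.

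First I would reformulate a dashing in cochain language. A dashing is recorded by the 1-cochain $\mu \in C^1(A)$ with $\mu(e)=1$ exactly when $e$ is dashed. For any 2-face $\sigma$, the definition $d\mu(\sigma)=\mu(\partial\sigma)$ together with $\partial\sigma=\sum_{n=1}^{2}\sum_{\alpha=0}^{1}\partial_n^\alpha\sigma$ shows that $d\mu(\sigma)$ is the sum in $\zt$, over the four boundary edges of $\sigma$, of their $\mu$-values; that is, $d\mu(\sigma)$ is the number of dashed edges on the boundary of $\sigma$, reduced mod $2$. Hence the \emph{oddness} condition ``every 2-face has an odd number of dashed edges'' is precisely the statement that $d\mu(\sigma)=1$ for every 2-face $\sigma$. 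Since every 2-cell of $A$ is a 2-face, since $\omega_2$ is by definition the 2-cochain assigning $1$ to each of them, and since a cochain is determined by its values on the 2-cells, this condition is equivalent to the single equation $d\mu=\omega_2$ in $C^2(A)$.

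With this reformulation in hand the equivalence is immediate. An odd dashing exists if and only if there is some $\mu\in C^1(A)$ with $d\mu=\omega_2$, i.e. if and only if $\omega_2\in\image\big(d\colon C^1(A)\to C^2(A)\big)=B^2(A)$. Now $\omega_2$ is already known to lie in $Z^2(A)$, since the computation given earlier in the excerpt checks that $d\omega_k=0$; thus its class $w_2=[\omega_2]\in H^2(A)=Z^2(A)/B^2(A)$ is defined, and $\omega_2\in B^2(A)$ holds exactly when $w_2=0$. Chaining these equivalences yields the theorem: an odd dashing exists on $A$ if and only if $w_2=0$.

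The individual steps are routine once the identification $d\mu=\omega_2$ is made, so there is no deep obstacle; the one point that deserves care is verifying that the combinatorial count of dashed edges on a 2-face genuinely coincides with $d\mu$ evaluated on the corresponding 2-cell of the quotient complex $A$, rather than on a cell of $I^N$ upstairs. I would settle this by noting that both $\omega_2$ and the coboundary $d$ are defined directly on $C^\ast(A)$, and that each 2-face of $A$ is the image of a square whose boundary consists of exactly its four colored edges, so that the sum $\sum_{n,\alpha}\mu(\partial_n^\alpha\sigma)$ really does record the parity of its dashed edges and no ambiguity arises.
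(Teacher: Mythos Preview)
Your proposal is correct and follows essentially the same line as the paper's own proof: both translate oddness of a dashing into the equation $d\mu=\omega_2$ and then read off that this is solvable precisely when $[\omega_2]=w_2=0$ in $H^2(A)$. Your extra paragraph checking that the boundary of a $2$-face in the quotient still has four edges is a reasonable point of hygiene but is not needed beyond what the paper assumes.
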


\begin{proof}
For a dashing to be odd means for every 2-cell $f$, there must be an odd number of dashed edges around $f$.  If $\mu$ represents this dashing, then this is equivalent to the statement that $\mu(\partial f)=1$ for all 2-cells $f$.  This, in turn, is equivalent to $d\mu(f)=1$ for all 2-cells $f$; in other words, that $d\mu=\omega_2$.  Since $d\omega_2=0$, the existence of such a $\mu$ is equivalent to whether or not $w_2=0$ in $H^2(A)$.
\end{proof}

\begin{example}
When $A$ is an $N$-dimensional cube, then since the geometric realization of $A$ is a CW complex, cubical cohomology coincides with ordinary singular cohomology; and since $I^N$ is contractible, $H^2(A)=0$.  Therefore $w_2=0$, and an odd dashing exists.
\end{example}

Remark:  Since previous work\cite{rAT0} has shown that the existence of an odd dashing is equivalent to the code $C$ being doubly even, it must also be true that the condition $w_2=0$ is equivalent to the code $C$ being doubly even.  As it happens, this can be proved directly in the following way: first identify the generators for $H_2(A)$.  These will be certain unions of 2-faces in the Adinkra and will be determined by the codewords of $C$.  Applying $w_2$ to these generators simply counts the 2-faces.  This will be zero modulo 2 if and only if the codeword in question is weight zero modulo 4.

The next question is that of uniqueness.  There might indeed be two 1-cochains $\mu_1$ and $\mu_2$, with $d\mu_1=\omega_2=d\mu_2$.  These are both odd dashings.  For instance, a vertex switch will change one odd dashing to another odd dashing.

\begin{proposition}
Let $A$ be a cubical complex, and $\mu$ an odd dashing.  Let $T$ be any set of vertices of $A$.  Let $f_T$ be the $0$-chain so that $f_T(v)=1$ for all $v\in T$ and $f_T(v)=0$ otherwise.  Then the dashing that results from $\mu$ of doing vertex switches for all vertices in $T$ is $\mu+df_T$.
\label{prop:vertexsignflip}
\end{proposition}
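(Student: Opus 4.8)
The plan is to verify the identity one edge at a time, showing that both the cochain $df_T$ and the net change in the dashing produced by the vertex switches assign to each edge the parity of the number of its endpoints lying in $T$. Since an odd dashing is recorded by a $1$-cochain $\mu$, it suffices to compare the values of $\mu$ (after switching) and of $\mu + df_T$ on an arbitrary $1$-cell.

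First I would compute $df_T$. By the definition of the coboundary, for any edge $e$ we have $df_T(e) = f_T(\partial e)$. The boundary $\partial e$ is the formal sum of the two endpoints of $e$, say $v$ and $w$, so working over $\zt$ gives $df_T(e) = f_T(v) + f_T(w)$. This is $0$ when both or neither of $v,w$ lies in $T$, and $1$ when exactly one of them does; in other words, $df_T(e)$ is the parity of $\#(\{v,w\}\cap T)$. Next I would analyze the effect of performing a vertex switch at every vertex of $T$. Each individual switch toggles the dashing on every edge incident to that vertex, so for a fixed edge $e$ with endpoints $v$ and $w$ the switches at $T$ toggle $e$ exactly once for each endpoint that belongs to $T$. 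Because two toggles cancel in $\zt$, the net change to $\mu(e)$ is again the parity of $\#(\{v,w\}\cap T)$. Comparing the two computations, the switched dashing agrees with $\mu + df_T$ on every edge, hence as cochains.

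The only real subtlety is bookkeeping: I must confirm that performing switches at a set of vertices behaves additively rather than merely as an uncontrolled composition, and that the order of the individual switches is irrelevant. This is handled cleanly by linearity. Writing $f_T = \sum_{v\in T} f_{\{v\}}$ and using that $d$ is $\zt$-linear reduces the statement to the single-vertex case: the coboundary $df_{\{v\}}$ is supported precisely on the edges incident to $v$ and equals $1$ there, which is exactly the prescription of a single vertex switch. Summing over $v\in T$ then reproduces the global switch, and since addition in $C^1(A)$ is commutative the order of switching cannot matter. I expect no genuine obstacle here; the content of the proposition is the observation that the combinatorial operation of vertex switching is literally the coboundary operator applied to the indicator cochain of the switched vertices.
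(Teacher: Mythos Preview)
Your argument is correct and follows essentially the same approach as the paper: reduce to the single-vertex case, where $df_{\{v\}}$ is supported exactly on the edges incident to $v$, and then extend to arbitrary $T$ by linearity of $d$. Your edge-by-edge parity computation is a slightly more explicit way of saying the same thing, but there is no substantive difference.
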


\begin{proof}
When $T$ consists of a single vertex $v$, $df_T$ is 1 for all edges adjacent to $v$.  Thus, $\mu+df_T$ differs from $\mu$ precisely for those edges that are adjacent to $v$.

The general proof then follows from iterating this procedure over all the elements of $T$.
\end{proof}

\begin{theorem}
Let $A$ be a cubical complex.  Two odd dashings $\mu_1$ and $\mu_2$ are related by a series of vertex switches if and only if $\mu_2-\mu_1=df$ for some $0$-cochain $f$.  Therefore, the set of odd dashings, modulo vertex switches, is in one-to-one correspondence with $H^1(A)$.\label{thm:classifydashings}
\end{theorem}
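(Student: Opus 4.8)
The plan is to prove the biconditional first and then read off the bijection as a formal consequence. For the first statement I would invoke Proposition~\ref{prop:vertexsignflip} in both directions. For the forward implication, suppose $\mu_2$ is obtained from $\mu_1$ by a series of vertex switches; since switching a vertex twice is the identity, the net effect is a switch on some set $T$ of vertices (those switched an odd number of times), and the proposition gives $\mu_2 = \mu_1 + df_T$, hence $\mu_2 - \mu_1 = df_T$. (Recall that over $\zt$ subtraction and addition coincide, so the minus sign causes no trouble.) For the converse, given $\mu_2 - \mu_1 = df$ for some $0$-cochain $f$, I would set $T = \{v : f(v) = 1\}$ so that $f = f_T$ in the notation of the proposition, and conclude that switching exactly the vertices of $T$ carries $\mu_1$ to $\mu_1 + df_T = \mu_2$. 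This settles the ``if and only if'' cleanly.

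For the bijection, the key observation is that the set of odd dashings is an affine space (a torsor) under the cocycle group $Z^1(A)$. Indeed, if $\mu_1$ and $\mu_2$ are both odd dashings, then $d(\mu_2 - \mu_1) = \omega_2 - \omega_2 = 0$, so their difference lies in $Z^1(A)$; conversely, adding any cocycle to an odd dashing yields another odd dashing. Fixing one base odd dashing $\mu_0$, every odd dashing is uniquely $\mu_0 + z$ with $z \in Z^1(A)$, and I would consider the map $\mu \mapsto [\mu - \mu_0] \in H^1(A) = Z^1(A)/B^1(A)$.

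It then remains to check that this map is constant on vertex-switch classes and separates distinct classes. By the first part, two odd dashings $\mu_0 + z_1$ and $\mu_0 + z_2$ are related by vertex switches exactly when $z_2 - z_1 \in B^1(A) = \image(d)$, that is, exactly when $[z_1] = [z_2]$ in $H^1(A)$. Thus the map descends to an injection on the quotient by vertex switches, and it is visibly surjective since every cohomology class has a cocycle representative $z$, realized by the odd dashing $\mu_0 + z$. This produces the asserted one-to-one correspondence.

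The argument is essentially formal once Proposition~\ref{prop:vertexsignflip} is in hand, so I do not anticipate a genuine obstacle. The one point requiring care is the existence of the base odd dashing $\mu_0$: without it the set of odd dashings would be empty while $H^1(A)$ is not, so the correspondence would fail. This is exactly the content of the hypothesis $w_2 = 0$ supplied by Theorem~\ref{thm:w2}, and the theorem should be read as implicitly assuming odd dashings exist. The hard part, if any, is conceptual rather than computational: recognizing that the odd dashings form a $Z^1(A)$-torsor and that vertex switches implement precisely the $B^1(A)$-action, so that passing to the quotient returns $H^1(A)$. A secondary subtlety worth flagging is that the resulting bijection depends on the choice of $\mu_0$; different base points differ by a translation in $H^1(A)$, so the natural structure is that of an $H^1(A)$-torsor rather than a canonical identification.
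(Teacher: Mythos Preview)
Your proof is correct and follows essentially the same approach as the paper's: both invoke Proposition~\ref{prop:vertexsignflip} in each direction for the biconditional, then fix a base odd dashing $\mu_0$ and use the map $\mu\mapsto[\mu-\mu_0]\in H^1(A)$ to obtain the bijection. Your write-up is somewhat more explicit about the torsor structure and about the tacit assumption that an odd dashing exists, but the underlying argument is identical.
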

 
\begin{proof}
Proposition~\ref{prop:vertexsignflip} shows that if a series of vertex switches is done on $\mu_1$, then $\mu_2=\mu_1+df$.

Conversely, if $\mu_2=\mu_1+df$, then by Proposition~\ref{prop:vertexsignflip} the vertex switch based on the set $f^{-1}(\{1\})$ on $\mu_1$ produces $\mu_2$.

Fix an odd dashing given by the 1-cochain $\mu_0$, so that $d\mu_0=\omega_2$.  Any odd dashing $\mu$ has $d\mu=\omega_2$, so $\mu-\mu_0$ is a 1-cocycle.  The mapping  $\mu\mapsto [\mu-\mu_0]\in H^1(A)$ is thus an affine-linear bijection from the set of odd dashings modulo vertex switches to $H^1(A)$.
\end{proof}

Remark: This is remarkably similar to the situation with spin structures on manifolds: the existence of a spin structure on an orientable manifold $M$ is determined by the second Stiefel--Whitney class of the tangent bundle $w_2(TX)\in H^2(M;\zt)$.  A necessary and sufficient condition for spin structures to exist is $w_2(TX)=0$, and the difference of two such spin structures is characterized by an element in $H^1(M;\zt)$, so that the set of spin structures is in bijection with $H^1(M;\zt)$.  It is not clear whether the similarity with the criterion for dashed edges is significant, or merely coincidental.  One possible connection is due to Cimasoni and Reshetikhin, where edge dashings on surface graphs correspond to spin structures\cite{CimResh}.

\begin{example}
Since we know that $H^1(I^N)=0$, it follows that for $N$-dimensional cubes, there is only one odd dashing up to vertex switches.
\end{example}

\begin{example}
\label{ex:d4}
We now consider a quotient of a $4$-cube by the code
\[d_4=\{(0,0,0,0),(1,1,1,1)\}.\]
The graph is obtained by taking two copies of the $3$-cube (colored black, red, and green), connected by blue lines representing the fourth color.  When we identify antipodal vertices, we are left with only one $3$-cube, and the blue lines connect antipodal vertices in the $3$-cube.  The result is the following:

\begin{center}
\begin{picture}(90,90)(0,0)
\put(0,0){\includegraphics[height=80pt]{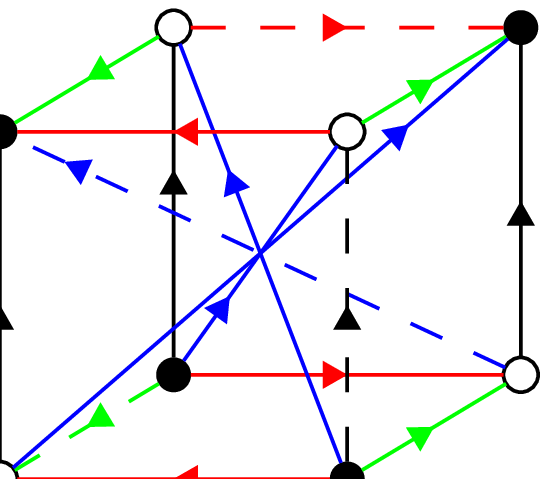}}
\end{picture}
\end{center}

As a CW complex, the 4-cube is a 4-ball, with its boundary 3-sphere as its 3-skeleton.  The quotient acts on the 3-sphere by antipodal mapping, and sends the 4-cell to itself in a way that the boundary is two copies of the 3-sphere.  Thus, the corresponding chain complex is that of the real projective space $\rp^4$.  It is known that $H^1(\rp^4;\zt)\cong \zt$, so there are two odd dashings up to vertex sign flips.  These correspond to the chiral\footnote{More precisely, the dimensional reduction to $D=1$ of the $D=4$ chiral multiplet} and twisted chiral multiplets.  We will show this in Example~\ref{ex:d4s}, once we have the more general theory in place. 
\end{example}

\section{The Adinkra cohomology of a quotient of a cube}
\subsection{Computation of Adinkra cohomology of a quotient of a cube}
\label{sec:adinkracohomology}
In this section we compute some homology and cohomology groups of a quotient of an $N$-cube by a binary linear code $C$.

\begin{theorem}
\label{thm:homology}
Let $A$ be a quotient of an $N$-cube by a binary linear $[N,k]$ block code $C$.  Suppose the minimal weight of a non-zero codeword in $C$ is greater than or equal to 4.  Then we have the following.
\begin{eqnarray*}
H_0(A)&\cong&\zt\\
H_1(A)&\cong&(\zt)^k\\
\\
H^0(A)&\cong&\zt\\
H^1(A)&\cong&(\zt)^k
\end{eqnarray*}
\end{theorem}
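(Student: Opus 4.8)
The plan is to exploit the hypothesis on the minimal weight: it guarantees that the code $C$ acts \emph{freely} on the low-dimensional skeleta of $I^N$, so that $A$ is the quotient by a free action in exactly the range of dimensions that controls $H_0$, $H^0$, $H_1$, and $H^1$. Throughout I will use that the cubical chain complex defined above is the cellular chain complex of the geometric realization of $A$, so that cubical homology agrees with singular homology with $\zt$-coefficients (as was already noted for the cube itself). The degree-zero statements are then immediate: $A$ is connected, being a continuous image of the connected space $I^N$, so $H_0(A)\cong\zt$, and since $H^0$ consists of the $\zt$-valued functions constant on connected components, $H^0(A)\cong\zt$ as well.

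The heart of the argument is $H_1$, and the first step is to check that $C$ acts freely on the $2$-skeleton $(I^N)^{(2)}$. A $J$-cell is fixed setwise, and in fact pointwise, by $\rho_x$ precisely when $x$ vanishes outside $J$, i.e.\ when $\mathrm{supp}(x)\subseteq J$; since $\rho_x$ permutes open cells, any fixed point forces its cell to be fixed setwise, so for a cell of dimension $\#J\le 2$ and a nonzero $x\in C$ this is impossible, every nonzero codeword having $\wt(x)\ge 4$. Hence no nonzero element of $C$ fixes a point of $(I^N)^{(2)}$, and the quotient map $(I^N)^{(2)}\to A^{(2)}$ is a regular covering with deck group $C\cong(\zt)^k$. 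Because $I^N$ is contractible and $\pi_1$ depends only on the $2$-skeleton, $(I^N)^{(2)}$ is simply connected; it is therefore the universal cover of $A^{(2)}$, and $\pi_1(A^{(2)})\cong C\cong(\zt)^k$.

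To finish, I note that $H_1$ depends only on cells of dimension at most $2$, so $H_1(A)=H_1(A^{(2)})$. By the Hurewicz theorem $H_1(A^{(2)};\bz)\cong\pi_1(A^{(2)})^{\mathrm{ab}}=(\zt)^k$ (the group is already abelian), and since $\zt$ is a field the universal coefficient theorem gives $H_1(A;\zt)\cong(\zt)^k$ and $H^1(A;\zt)\cong\Hom_{\zt}\bigl(H_1(A;\zt),\zt\bigr)\cong(\zt)^k$, with $H^0(A)\cong\Hom_{\zt}\bigl(H_0(A),\zt\bigr)\cong\zt$ reconfirming the degree-zero cohomology. This disposes of all four isomorphisms.

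The step I expect to be the main obstacle, and the one deserving the most care, is the freeness claim together with its consequence that the quotient is a genuine covering. It is essential to restrict to the $2$-skeleton: on the full cube the action is \emph{not} free (any weight-$4$ codeword fixes pointwise its corresponding $4$-cell, and the centre of $I^N$ is fixed by every $\rho_x$), which is exactly why the hypothesis $\wt\ge 4$ buys freeness only through dimension $3$. One should also confirm the two pieces of topological bookkeeping invoked above: that a finite group acting freely and cellularly on a CW complex yields a covering onto its quotient, and that the cubical boundary operator genuinely computes the cellular, hence singular, homology of the realization, so that the Hurewicz and universal-coefficient inputs legitimately apply.
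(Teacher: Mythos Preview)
Your proposal is correct and follows essentially the same route as the paper: both use the minimal-weight hypothesis to get a free $C$-action on a low-dimensional skeleton of $I^N$, identify the quotient as a covering with deck group $C$, read off $\pi_1\cong C$ from simple connectivity upstairs, and then invoke Hurewicz and universal coefficients. The only cosmetic difference is that the paper works with the $3$-skeleton (since it also wants to mention $H^2$ in a remark) while you use the $2$-skeleton, which suffices for $H_1$ and $H^1$.
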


\begin{proof}
When $(w_1,\ldots,w_N)\in C$ is a codeword, the reflection $\rho_w$ on the $N$-cube has fixed points inside cells $(x_1,\ldots,x_N)$ where $x_i=*$ whenever $w_i=1$.  Since the minimal weight of a non-zero codeword in $C$ is greater than or equal to 4, the fixed points of the action of $C$ on $I^N$ lie only in cells of dimension 4 and higher.  Thus, $C$ acts freely on the 3-skeleton of $I^N$ (denoted $I^{N(3)}$), and therefore, $I^{N(3)}/C$ is a geometric realization for $A$ up to dimension $3$.

The computations of $H_k(A)$ and $H^k(A)$ for $k\le 2$ involve only the cells up to dimension $3$, and since $I^{N(3)}/C$ is the CW complex corresponding to the cells up to dimension $3$ in $A$, we have for $k\le 2$:
\[
H_k(A)\cong H_k(I^{N(3)}/C;\zt)
\]
and
\[
H^k(A)\cong H^k(I^{N(3)}/C;\zt).
\]
Since $I^{N(3)}$ is connected, this implies the result for $H_0(A)$ and $H^0(A)$.

Since $I^{N(3)}$ is a covering space of $I^{N(3)}/C$ with covering group $C$, we have that $\pi_1(I^{N(3)}/C)\cong C$.  By the Hurewicz theorem and the group isomorphism $C\cong (\zt)^k$, this provides $H_1(I^{N(3)}/C)\cong (\zt)^k$.  The universal coefficient theorem can be used to show that
\[
H^1(I^{N(3)}/C)\cong (\zt)^k.
\]
\end{proof}

Remark: It is possible to prove that $H^2(A)\cong (\zt)^{k+\binom{k}{2}}$ through a more involved argument, but that computation will not be used in this paper.

\subsection{Cohomology generators $\epsilon_j$}
In the previous section, we calculated the first cohomology group of an Adinkra $A$.  But it will be useful to have specific generators.  In this section, we define such generators, and these definitions do not depend on $C$ being doubly even.  The fact that these generate $H_1$ does depend on the previous section, however.  Also note that these are not, generally speaking, linearly independent.

We define $\epsilon_1,\ldots,\epsilon_N$ as follows: let $\epsilon_j$ be the 1-cochain that is 1 on all edges of color $j$, and 0 on all other edges.

\begin{proposition}
The 1-cochain $\epsilon_j$ is a 1-cocycle.
\end{proposition}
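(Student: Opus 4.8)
The plan is to show that $d\epsilon_j = 0$ by evaluating $d\epsilon_j$ on an arbitrary 2-cell and checking that it vanishes in $\zt$. By definition $d\epsilon_j(f) = \epsilon_j(\partial f)$ for every 2-cell $f$, and since coboundaries are determined by their values on cells, it suffices to verify $\epsilon_j(\partial f) = 0$ for each 2-cell $f$ in $A$. I would work directly with the 2-cells of the quotient, or equivalently with the $J$-color-faces where $\#J = 2$; by the square property of Adinkras (which holds for quotients by codes of minimal weight $\ge 4$, and in any case the boundary structure of a 2-cell is that of a quadrilateral), each such 2-face is a quadrilateral.

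The key computation is as follows. A 2-cell $f$ has two distinct colors in its color set $J = \{j_1, j_2\}$. Its boundary $\partial f = \sum_{n=1}^{2}\sum_{\alpha=0}^{1}\partial_n^\alpha f$ consists of exactly four edges: two edges of color $j_1$ (namely $\partial_1^0 f$ and $\partial_1^1 f$, using $\Psi_J$ to identify the coordinate) and two edges of color $j_2$. Now evaluate $\epsilon_j$ on this sum. If $j \notin J$, then none of the four boundary edges has color $j$, so $\epsilon_j(\partial f) = 0$ trivially. If $j \in J$, then exactly two of the four boundary edges have color $j$, so $\epsilon_j(\partial f) = 1 + 1 = 2 = 0$ in $\zt$. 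In either case $d\epsilon_j(f) = 0$, and since $f$ was an arbitrary 2-cell, $d\epsilon_j = 0$, i.e. $\epsilon_j \in Z^1(A)$.

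The main point to get right is that the edge coloring is well-defined on the quotient and that the boundary of a 2-face contains precisely two edges of each of its two colors. This parallels exactly the computation already carried out in the excerpt showing $d\omega_k = 0$: there $\omega_k(\partial\sigma)$ counted all $2k$ boundary faces and gave $0$ mod $2$, whereas here $\epsilon_j$ is the color-selective analogue that picks out only the color-$j$ edges, of which a 2-face has either zero or exactly two. I do not anticipate a genuine obstacle; the only care needed is to confirm that passing to the quotient by the code $C$ does not disturb the coloring (each reflection $\rho_x$ preserves edge colors, since it acts within fixed coordinates), so that $\epsilon_j$ descends to a well-defined cochain on $A$ and the boundary count is unaffected.
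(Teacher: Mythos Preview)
Your proof is correct and essentially identical to the paper's: both evaluate $d\epsilon_j$ on an arbitrary 2-face and use that the boundary edges of a given color come in pairs (the paper phrases this as ``$\partial_n^1\sigma$ is of color $j$ if and only if $\partial_n^0\sigma$ is''), so the sum vanishes in $\zt$. One harmless indexing slip: since $\partial_n^\alpha$ replaces the $*$ at position $\Psi_J(n)$, the edges $\partial_1^0 f$ and $\partial_1^1 f$ retain the $*$ at position $j_2$ and hence have color $j_2$, not $j_1$---but this does not affect your pairing argument.
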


\begin{proof}
Let $\sigma$ be any 2-face.  Then
\begin{eqnarray*}
d\epsilon_j(\sigma)&=&\epsilon_j(\partial\sigma)\\
&=&\sum_{n=1}^2 (\epsilon_j(\partial_n^1\sigma)+\epsilon_j(\partial_n^0\sigma))
\end{eqnarray*}
The edge $\partial_n^1\sigma$ is of color $j$ if and only if $\partial_n^0\sigma$ is.  This implies that this sum is zero modulo 2.
\end{proof}

We will sometimes use the notation $\epsilon_j$ to denote the cohomology class represented by $\epsilon_j$.

\subsection{Lifting paths and squares}
To represent homology classes in $A$ it will be convenient to have a combinatorial version of path lifting.
\begin{lemma}
Let $A$ be a graph with edge coloring that is color-regular.
Suppose $v_0$ is any vertex of $A$.  Given a sequence of colors $j_1,\ldots,j_r$, there is a unique path starting at $v_0$ of length $r$ so that the $i$th edge in the path has color $j_i$.\label{lemma:pathlifting}
\end{lemma}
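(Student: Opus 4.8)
The plan is to argue by induction on the length $r$, using color-regularity to produce a unique continuation at each step. First I would fix the meaning of \emph{path}: here a path of length $r$ starting at $v_0$ should be understood as a sequence of vertices $v_0,v_1,\ldots,v_r$ together with edges $e_1,\ldots,e_r$, where each $e_i$ is incident with both $v_{i-1}$ and $v_i$, with no condition forbidding repeated vertices or edges. In other words, what we are really lifting is a walk. This matters because when two consecutive colors agree the continuation will retrace the previous edge, and the statement should (and does) still hold.

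For the base case $r=0$ the only path is the single vertex $v_0$, which is trivially unique. Equivalently one may start the induction at $r=1$, where color-regularity directly gives a unique edge incident with $v_0$ of color $j_1$, hence a unique path of length $1$.

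For the inductive step, assume the claim holds for color sequences of length $r-1$. Given colors $j_1,\ldots,j_r$, the induction hypothesis yields a unique path $v_0,\ldots,v_{r-1}$ whose $i$th edge has color $j_i$ for $i\le r-1$. To extend it, I apply color-regularity at the vertex $v_{r-1}$ with the color $j_r$: there is a unique edge $e_r$ incident with $v_{r-1}$ of color $j_r$, and this edge determines its other endpoint $v_r$. Appending $e_r$ and $v_r$ produces a path of length $r$ with the prescribed colors, establishing existence. For uniqueness, any path realizing $j_1,\ldots,j_r$ restricts, upon deleting its last edge, to a path realizing $j_1,\ldots,j_{r-1}$; by the induction hypothesis this initial segment coincides with $v_0,\ldots,v_{r-1}$, and then its final edge is an edge of color $j_r$ incident with $v_{r-1}$, which by color-regularity must be $e_r$. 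Hence the entire path agrees with the one just constructed.

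There is essentially no deep obstacle here: the whole content of the lemma is that color-regularity converts the data ``a vertex together with a color'' into ``a uniquely determined neighboring vertex,'' so that following a color sequence is a deterministic process. The only point that genuinely requires care is the one flagged in the first paragraph, namely confirming that the working definition of path permits backtracking, so that the unique color-$j_r$ edge at $v_{r-1}$ always extends the walk even in the degenerate case $j_r=j_{r-1}$, where $e_r$ is precisely the edge just traversed.
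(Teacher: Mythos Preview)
Your proof is correct and follows exactly the approach indicated in the paper, which simply says the lemma ``can be proved by induction on the length $r$ of the sequence of colors, using the color-regularity property.'' Your write-up spells out the details of that induction (and usefully makes explicit that ``path'' here means a walk allowing backtracking), but the underlying argument is the same.
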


\begin{proof}
This can be proved by induction on the length $r$ of the sequence of colors, using the color-regularity property.
\end{proof}

Likewise, it will be useful to lift squares, and so we prove the following lemma:
\begin{lemma}
Let $A$ be a graph with edge coloring that is color-regular, and for which the 2-faces are quadrilaterals.  Suppose $v_0$ is any vertex of $A$.  Given two different colors $i$ and $j$, and the graph $G$ with edge colors
\begin{center}
\begin{picture}(55,60)(-5,-10)
\put(0,0){\line(1,0){40}}
\put(0,0){\line(0,1){40}}
\put(0,40){\line(1,0){40}}
\put(40,0){\line(0,1){40}}
\put(0,0){\circle*{4}}
\put(40,0){\circle*{4}}
\put(0,40){\circle*{4}}
\put(40,40){\circle*{4}}
\put(20,-10){\makebox[0in][c]{$i$}}
\put(20,42){\makebox[0in][c]{$i$}}
\put(-3,18){\makebox[0in][r]{$j$}}
\put(42,18){\makebox[0in][l]{$j$}}
\put(-10,-10){$P$}
\put(42,-10){$Q$}
\put(42,42){$R$}
\put(-10,42){$S$}
\end{picture}
\end{center}
there is a unique mapping of $G$ to $A$ where $P$ is sent to $v_0$ and edges are sent to edges of the same color.  There is a unique $\{i,j\}$-color-face in $A$ with this boundary.
\label{lemma:squarelifting}
\end{lemma}

\begin{proof}
We first view this square as a path with sequence of colors $i$, $j$, $i$, $j$:

\begin{center}
\begin{picture}(160,24)(0,-12)
\put(0,0){\line(1,0){40}}
\put(40,0){\line(1,0){40}}
\put(80,0){\line(1,0){40}}
\put(120,0){\line(1,0){40}}
\put(0,0){\circle*{4}}
\put(40,0){\circle*{4}}
\put(80,0){\circle*{4}}
\put(120,0){\circle*{4}}
\put(160,0){\circle*{4}}
\put(0,-12){\makebox[0in][c]{$P$}}
\put(40,-12){\makebox[0in][c]{$Q$}}
\put(80,-12){\makebox[0in][c]{$R$}}
\put(120,-12){\makebox[0in][c]{$S$}}
\put(160,-12){\makebox[0in][c]{$T$}}
\put(20,5){\makebox[0in][c]{$i$}}
\put(60,5){\makebox[0in][c]{$j$}}
\put(100,5){\makebox[0in][c]{$i$}}
\put(140,5){\makebox[0in][c]{$j$}}
\end{picture}
\end{center}

By Lemma~\ref{lemma:pathlifting}, there is a unique path in $A$ that lifts this path where $P$ goes to $v_0$.  Since there are two colors, this path is part of an $\{i,j\}$-color-face $\sigma$.  Since the 2-faces of $A$ are quadrilaterals, the path must close, so that $P$ and $T$ go to the same vertex in $A$.  Thus, the square $G$ maps into $A$.  The boundary of $\sigma$ is that path.
\end{proof}

\subsection{$H_1$-generators $P_c$}\label{sec:pgenerators}
Theorem~\ref{thm:homology} says that $H_1(A)$ is isomorphic to the code $C$, but it will be useful to have a more constructive description of this isomorphism.

\begin{definition}
Pick a vertex $v_0$ from the Adinkra $A$.  Let $w=(w_1,\ldots,w_N)\in C$ be a word of length $N$ and weight $r$.  Let $1\le j_1<j_2<\cdots<j_r\le N$ be the integers so that $w_{j_i}=1$.  

We use Lemma~\ref{lemma:pathlifting} to find a path ${}^{v_0}P_w$ in $A$ starting from $v_0$ so that the $i$th edge is of color $j_i$ for all $i$.  We interpret ${}^{v_0}P_w$ as a sum in $C_1(A)$ of its edges.  When $v_0$ is understood, we will write $P_w$.
\end{definition}

\begin{example}
Example~\ref{ex:d4} has a code $\{(0,0,0,0),(1,1,1,1)\}$.  Take the codeword $w=(1,1,1,1)$, so that $j_1<j_2<j_3<j_4$ is $1<2<3<4$.  Pick the bosonic vertex on the lower left of the diagram.  The ordering of the $j_i$ means we go up from this vertex along color 1 (black), then to the right along color 2 (red), then to the back face using color 3 (green), then across the diagonal along color 4 (blue).  Note that this brings us back to the starting point.  This path is $P_{(1,1,1,1)}$.  The interested reader may verify that a similar path, starting at any other vertex, will also form a loop.
\label{ex:path}
\end{example}

\begin{proposition}
Fix a vertex $v_0$ in $A$.  For every codeword $w\in C$, ${}^{v_0}P_w$ is a 1-cycle.  The mapping $\Pi\colon C\to H_1(A)$ sending $w\in C$ to the homology class $[{}^{v_0}P_w]\in H_1(A)$ is a monomorphism, and if the minimal weight of a non-zero codeword in $C$ is at least 3, then $\Pi$ is an isomorphism.
\label{prop:pi}
\end{proposition}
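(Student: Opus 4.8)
First I would check that each ${}^{v_0}P_w$ is a cycle. By Lemma~\ref{lemma:pathlifting} the walk ${}^{v_0}P_w$ lifts uniquely to an edge-path $\tilde P_w$ in the cube $I^N$ beginning at a chosen lift of $v_0$; since traversing an edge of color $j$ flips the $j$-th coordinate and the colors used are exactly the support of $w$, the lift $\tilde P_w$ ends at the vertex $v_0+w$. As $w\in C$, this vertex is identified with $v_0$ in $A$, so ${}^{v_0}P_w$ is a closed walk. Writing it as $e_1+\dots+e_r$ with $e_i$ joining $u_{i-1}$ to $u_i$, the boundary $\partial({}^{v_0}P_w)=\sum_i (u_{i-1}+u_i)$ telescopes over $\zt$ to $u_0+u_r=v_0+v_0=0$. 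Thus ${}^{v_0}P_w\in Z_1(A)$ and $\Pi$ is well defined.

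Next I would prove $\Pi$ is a homomorphism by lifting to the cube and projecting. Write $q\colon C_\bullet(I^N)\to C_\bullet(A)$ for the quotient chain map, which commutes with $\partial$. Fix $w,w'\in C$ and let $\tilde Q$ be the concatenation of $\tilde P_w$ (from $v_0$ to $v_0+w$) with the cube path that starts at $v_0+w$ and traverses, in increasing order, the colors in the support of $w'$ (ending at $v_0+w+w'$). Both $\tilde Q$ and $\tilde P_{w+w'}$ run from $v_0$ to $v_0+w+w'$, so $\tilde Q+\tilde P_{w+w'}$ is a $1$-cycle lying in the $2$-skeleton $I^{N(2)}$. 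Since $I^{N(2)}$ is simply connected---being the $2$-skeleton of the contractible cube $I^N$---we have $H_1(I^{N(2)})=0$, so this cycle equals $\partial\tilde D$ for some $\tilde D\in C_2(I^{N(2)})$. Now push down with $q$: one has $q(\tilde P_w)={}^{v_0}P_w$ and $q(\tilde P_{w+w'})={}^{v_0}P_{w+w'}$, while the image of the translated piece is a walk from $q(v_0+w)=v_0$ using the colors of $w'$, hence equals ${}^{v_0}P_{w'}$ by the uniqueness in Lemma~\ref{lemma:pathlifting}. Therefore ${}^{v_0}P_w+{}^{v_0}P_{w'}+{}^{v_0}P_{w+w'}=q(\partial\tilde D)=\partial(q\tilde D)$ is a boundary, so $\Pi(w)+\Pi(w')=\Pi(w+w')$ in $H_1(A)$. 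I expect this projection step---identifying the image of the translated lift and tracking endpoints through the identification $v_0+w\sim v_0$---to be the part demanding the most care.

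Injectivity then comes for free from the cocycles $\epsilon_j$. Because each $\epsilon_j$ is a $1$-cocycle, it defines a functional on $H_1(A)$, and $\epsilon_j({}^{v_0}P_w)$ counts, modulo $2$, the color-$j$ edges of $P_w$; since $P_w$ uses each color in the support of $w$ exactly once, this equals $w_j$. Hence if $\Pi(w)=[{}^{v_0}P_w]=0$, say ${}^{v_0}P_w=\partial c$, then $w_j=\epsilon_j(\partial c)=(d\epsilon_j)(c)=0$ for every $j$, so $w=0$. Thus $\Pi$ is a monomorphism, with left inverse $z\mapsto(\epsilon_1(z),\dots,\epsilon_N(z))$.

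Finally, under the hypothesis that the minimal nonzero weight is at least $3$, I would obtain surjectivity by a dimension count. For $0\neq w\in C$ the fixed cells of $\rho_w$ all have dimension $\wt(w)\geq 3$, so $C$ acts freely on the $2$-skeleton $I^{N(2)}$; as in the proof of Theorem~\ref{thm:homology}, $I^{N(2)}/C$ realizes the cells of $A$ through dimension $2$, and since $H_1$ depends only on those cells, $H_1(A)\cong H_1(I^{N(2)}/C)$. Here $I^{N(2)}$ is the universal cover, being simply connected, so $H_1(I^{N(2)}/C)\cong C\cong(\zt)^k$ has exactly $2^k$ elements. An injective homomorphism between finite groups of the same order is bijective, so $\Pi$ is onto and hence an isomorphism.
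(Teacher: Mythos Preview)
Your argument is correct in all four parts. The cycle check, the injectivity via the cocycles $\epsilon_j$, and the final dimension count all match the paper's proof essentially line for line (your use of the $2$-skeleton rather than the $3$-skeleton is in fact the sharp version needed for the hypothesis ``minimal weight $\ge 3$'').

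The one substantive difference is in the proof that $\Pi$ is a homomorphism. The paper works entirely inside $A$: it concatenates the color sequences for $v$ and $w$, then performs a bubble sort on this sequence, realizing each adjacent transposition as the boundary of an explicit $2$-face supplied by Lemma~\ref{lemma:squarelifting}; after sorting, pairs of repeated colors cancel as backtracks, leaving ${}^{v_0}P_{v+w}$. Your route instead lifts everything to $I^N$, invokes $H_1(I^{N(2)})=0$ once to produce a bounding $2$-chain, and pushes it down through the chain map $q$. Your approach is shorter and more conceptual, and it makes transparent why the result holds: the cube is simply connected, so any two paths with the same endpoints are homologous there, and the quotient map carries this relation to $A$. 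The paper's bubble-sort argument, by contrast, is self-contained within $A$, exhibits the bounding $2$-chain explicitly as a sum of color-squares, and uses only the combinatorial square property rather than any topological input about $I^N$; this explicitness is exploited later (e.g.\ in the proof of Theorem~\ref{thm:switchv0}), where one needs to count the $2$-faces in the homology and evaluate $\omega_2$ on them.
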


\begin{proof}
There is a vertex $x=(x_1,\ldots,x_N)$ in $\{0,1\}^N$ that goes to $v_0$ under the quotient of $I^N$ by $C$.  Use Proposition~\ref{lemma:pathlifting} to find a path ${}^x\tilde{P}_w$ in $I^N$ starting at $x$ so that the $i$th edge is of color $j_i$.  Under the quotient by $C$, this path results in a similar path starting from $v_0$ so that the color of the $i$th edge is $j_i$ for all $i$.  By the uniqueness of such a path, this must be ${}^{v_0}P_w$.  Since ${}^x\tilde{P}_w$ ends at $x+w$ and $w\in C$, it must be that ${}^{v_0}P_w$ ends at $v_0$.  Therefore, ${}^{v_0}P_w$ is a cycle.

We now show that the map $\Pi$ is a homomorphism.  Suppose $v$ and $w$ are two words in $C$, with corresponding sequence of colors $i_1<\ldots<i_q$ and $j_1<\ldots<j_r$, respectively.

Consider the color sequence $i_1,\ldots,i_q,j_1,\ldots,j_r$.  Write this sequence as $k^0_1,\ldots,k^0_s$.  We obtain a path $F^0$ starting at $v_0$ following this sequence of colors.  This path consists of a sequence of vertices $v^0_0,\ldots,v^0_s$ and edges $e^0_1,\ldots,e^0_s$ so that for each $i$, $e^0_i$ is incident with $v^0_{i-1}$ and $v^0_i$, and is of color $k^0_i$.  As a homology class, $[F^0]=[P_v]+[P_w]$.

At each stage $t$, we perform a sequence of adjacent swaps, that is, for some index $1\le n<s$, we define
\[
k^t_i=
\begin{cases}
k^{t-1}_n,&\mbox{if $i=n+1$}\\
k^{t-1}_{n+1},&\mbox{if $i=n$}\\
k^{t-1}_i,&\mbox{otherwise.}
\end{cases}
\]
The choice of $n$ at each stage $t$ is given by the bubble sort algorithm to obtain for some $t$ a sequence $k^t_i$ that is non-decreasing.  We can ensure that for no $t$ do we swap identical colors.

For each $t$, we produce a path $F^t$ starting at $v_0$ and following the colors in $k^t_i$.  More specifically, $F^t$ consists of a sequence of vertices $v^t_0=v_0,\ldots,v^t_s$ and edges $e^t_1,\ldots,e^t_s$ with $e^t_i$ incident to $v^t_{i-1}$ and $v^t_i$ and color $k^t_i$, for all $i$.  Consider the swap done at stage $t$, and let $n$ be the index such that $k^t_n=k^{t-1}_{n+1}$.  If we consider the vertex $v_{n-1}$ in $F^{t-1}$, and let $S$ be the square in $A$ starting at $v_{n-1}$ with colors $k^t_n$ and $k^t_{n+1}$ (see Prop.~\ref{lemma:squarelifting}), then $S$ is a face so that $\partial S$ is the union of the edges $e^{t-1}_n, e^{t-1}_{n+1}, e^t_n, e^t_{n+1}$.  Thus, $F^t=F^{t-1}+S$.  In this way, by induction, $[F^t]=[F^0]$ for all $t$.

Let $t$ be the stage at which the $k^t_i$ are non-decreasing.  Then $\Pi^t$ consists of a path, possibly with backtracks when two adjacent colors are identical.  Note that runs of identical colors may be at most 2 edges long.  When there is a backtrack corresponding to a run of 2 identical colors, then as an element of $C_1(A)$, these two edges cancel, and the corresponding pair of colors may be deleted.  The result of removing such pairs shows that $\Pi^t$ is the path starting at $v_0$ with colors corresponding to the non-zero entries in the codeword $v+w$.  Since $C$ is a finite group, this suffices to prove that $\Pi$ is a homomorphism.

To establish that $\Pi$ has trivial kernel, let $w=(w_1,\ldots,w_N)$ be any non-zero codeword, and let $j$ be any index so that $w_j=1$.  Then
\[
\epsilon_j([P_w]) = \sum_{i=1}^r \epsilon_j(e_i)=1
\]
and therefore, $[P_w]\not=0$, and thus, $\Pi$ is a monomorphism.

If the minimal weight of a non-zero codeword in $C$ is at least 3, then Theorem~\ref{thm:homology} applies and $C$ and $H_1(A)$ are vector spaces of the same finite dimension over $\zt$.  Thus, $\Pi$ being a monomorphism implies it is an isomorphism.
\end{proof}

\section{Classification of odd dashings modulo vertex switches}
Suppose we have an Adinkra $A$ that is a quotient of an $N$-cube by a doubly even code $C$.  We know from Theorem~\ref{thm:classifydashings} that the set of odd dashings modulo vertex switches is an affine space modeled on $H^1(A)$, which, by Theorem~\ref{thm:homology}, is isomorphic to $(\zt)^N$.  We will now describe a particular isomorphism from the set of odd dashings modulo vertex switches to $(\zt)^N$.  The point is to describe a complete invariant $(s_1,\ldots,s_k)\in (\zt)^N$ for the set of odd dashings modulo vertex switches which can be used in computations.

\begin{definition}
Suppose $c_1,\ldots,c_k$ is a generating set for the code $C$.  Pick a vertex $v_0$ of $A$.  As in Section~\ref{sec:pgenerators}, for each $i$, let ${}^{v_0}P_{c_i}$ be the path starting from $v_0$ and corresponding to the colors indicated by $c_i$.  For each $\mu\in C^1(A)$ associated with an odd dashing, define ${}^{v_0}s(\mu)=(s_1,\ldots,s_k)\in(\zt)^k$ with $s_i$ equal to the number, modulo 2, of dashed edges of $\mu$ along ${}^{v_0}P_{c_i}$.  As before, when $v_0$ is understood, we can write $s(\mu)$.\footnote{More invariantly, but less convenient computationally, ${}^{v_0}s(\mu)$ could be viewed as a linear functional on the code $C$.}
\end{definition}

\begin{example}
In Example~\ref{ex:path}, we followed a path that involved zero dashed edges.  So $s$ in this case is 0.  The interested reader can try this with other starting points, and should note in this case that $s$ is 1 if the starting vertex is a fermion, and 0 if the starting vertex is a boson.
\end{example}

\begin{theorem}
The map $s$ from the set of odd dashings modulo vertex switches to $(\zt)^k$ is a bijection.
\label{thm:dashingfunction}
\end{theorem}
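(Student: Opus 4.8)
The plan is to recognize the coordinates $s_i=\mu({}^{v_0}P_{c_i})$ as the Kronecker evaluation pairing of the $1$-cochain $\mu$ against the $1$-cycles ${}^{v_0}P_{c_i}$, and then to reduce the bijectivity of $s$ to the nondegeneracy of this pairing together with the facts already in hand about $H^1(A)$, $H_1(A)$, and $C$.

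First I would check that $s$ descends to vertex-switch classes and is affine. Each ${}^{v_0}P_{c_i}$ is a $1$-cycle by Proposition~\ref{prop:pi}, so for any $f\in C^0(A)$ we have $df({}^{v_0}P_{c_i})=f(\partial\,{}^{v_0}P_{c_i})=f(0)=0$; hence a vertex switch $\mu\mapsto\mu+df$ (Proposition~\ref{prop:vertexsignflip}) leaves every $s_i$ unchanged, and $s$ is well defined on odd dashings modulo vertex switches. Moreover, if $\mu_1,\mu_2$ are two odd dashings then $\eta=\mu_2-\mu_1$ satisfies $d\eta=\omega_2-\omega_2=0$, so $\eta\in Z^1(A)$ and $s(\mu_2)-s(\mu_1)=(\eta({}^{v_0}P_{c_1}),\ldots,\eta({}^{v_0}P_{c_k}))$ depends only on $[\eta]\in H^1(A)$. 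Thus $s$ is an affine map lying over the linear map $\Phi\colon H^1(A)\to(\zt)^k$ sending $[\eta]$ to $(\langle[\eta],[{}^{v_0}P_{c_i}]\rangle)_{i=1}^{k}$, where $\langle\cdot,\cdot\rangle$ denotes the evaluation pairing $H^1(A)\times H_1(A)\to\zt$. Since by Theorem~\ref{thm:classifydashings} the domain of $s$ is an affine space over $H^1(A)$, the map $s$ is a bijection precisely when $\Phi$ is a linear isomorphism.

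Next I would identify $\Phi$ as a composite of isomorphisms. The evaluation pairing gives a Kronecker map $H^1(A)\to\Hom(H_1(A),\zt)$, which is an isomorphism because we use field ($\zt$) coefficients, via the universal coefficient theorem already invoked in Theorem~\ref{thm:homology}. Since a doubly even code has minimal weight at least $4$, Proposition~\ref{prop:pi} shows that $\Pi\colon C\to H_1(A)$ is an isomorphism, so its dual $\Pi^*\colon\Hom(H_1(A),\zt)\to\Hom(C,\zt)$ is too; identifying $\Hom(C,\zt)\cong(\zt)^k$ via the basis $c_1,\ldots,c_k$ recovers exactly $\Phi$. Hence $\Phi$ is an isomorphism and $s$ is a bijection. (Equivalently, since $|H^1(A)|=2^k=|(\zt)^k|$, it would suffice to prove $\Phi$ injective and then count.)

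The hard part is the nondegeneracy of the pairing $H^1(A)\times H_1(A)\to\zt$ underlying the isomorphism $H^1(A)\cong\Hom(H_1(A),\zt)$ --- concretely, that a $1$-cocycle vanishing on every cycle ${}^{v_0}P_{c_i}$ must be a coboundary. Over the field $\zt$ this is exactly the universal-coefficient statement that cohomology is the linear dual of homology, which we borrow from Theorem~\ref{thm:homology}; the only additional input needed is that the classes $[{}^{v_0}P_{c_i}]$ form a basis of $H_1(A)$, which is the content of Proposition~\ref{prop:pi}. The well-definedness and the affine structure, by contrast, are routine consequences of $\partial(\text{cycle})=0$ and $d\mu=\omega_2$.
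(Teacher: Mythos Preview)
Your argument is correct and follows essentially the same line as the paper's own proof: both show that $s$ is invariant under vertex switches (via $\partial\,{}^{v_0}P_{c_i}=0$), that it is affine over a linear map $H^1(A)\to(\zt)^k$ given by evaluation on the cycles ${}^{v_0}P_{c_i}$, and that this linear map is an isomorphism because the $[{}^{v_0}P_{c_i}]$ form a basis of $H_1(A)$ (Proposition~\ref{prop:pi}) and cohomology pairs nondegenerately with homology over $\zt$. The only stylistic difference is that the paper argues injectivity plus a dimension count, whereas you factor $\Phi$ explicitly as a composite of known isomorphisms through $\Hom(H_1(A),\zt)$; these are two phrasings of the same content.
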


\begin{proof}
First, note that $s$ is unchanged under vertex switches, because
\[
s(\mu+df)=s(\mu)+df(P_{c_i})=s(\mu)+f(\partial P_{c_i})=s(\mu)+0.
\]
Therefore the domain of $s$ can be taken to be odd dashings modulo vertex sign switches.

Now note that $s$ is affine-linear.  That is, if we let $\mu_0$ be an odd dashing, then $s(\mu)-s(\mu_0)$ is a linear function of $\mu-\mu_0$.  The map $s$ is then a bijection if and only if the corresponding linear function is one as well.  Since the dimensions of the domain and range are equal, the only thing we need to check is that the linear map is an injection.

Suppose $s(\mu)-s(\mu_0)=0$; in other words, $\mu(P_{c_i})-\mu_0(P_{c_i})=0$ for all $i$.  By linearity, $\mu(P_c)-\mu_0(P_c)=0$ for all $c\in C$.  Therefore, $\mu-\mu_0=0$ in cohomology, and $\mu-\mu_0=df$ for some $f$ given by a sequence of vertex switches.
\end{proof}

Remark.  Even though the paths $[P_{c_i}]$ generate $H_1(A)$, the application of $\mu$ to such a 1-cycle is not well-defined on the homology class; that is, it may differ when a path is modified to something homologous.  The reason is that $\mu$ is not a 1-cocycle, since $d\mu$ is $\omega_2$, not 0.

Along these lines, note that the actual sequence $s_1,\ldots,s_k$ depends on $v_0$.  It also depends on the ordering of the colors when lifting the sequence of colors to $P_{c_i}$, which we chose by specifying $j_1<\cdots<j_r$.  But if these choices are fixed for a given Adinkra, then this theorem says that $s(\mu)$ is a complete invariant for dashings modulo vertex switches.  We will consider the effect of changing $v_0$ in Section~\ref{sec:changev0}.

\subsection{Edge sign flips}
To enumerate all Adinkras, it will be useful to find all odd dashings modulo vertex switches for a given Adinkra.  This involves inverting the map $s$, meaning given a $k$-tuple $(s_1,\ldots,s_k)$, we wish to find an odd dashing $\mu$.  Actually, we assume that some other odd dashing $\mu_0$ is already given, with some $s(\mu_0)=(\sigma_1,\ldots,\sigma_k)$.  Then we describe what to do to $\mu_0$ to change $s(\mu_0)$ to any other $k$-tuple.  This will be achieved with what are called edge sign flips, described next.

If we are given an odd dashing, and $j$ is a color, then an {\em edge sign flip of color $j$} takes the dashing and reverses the dashing of all edges of color $j$.  An example of this was reversing the dashing on the blue edges in Example~\ref{ex:d4}. 

\begin{proposition}
Let $A$ be an Adinkra with odd dashing described by $\mu_1\in C^1(A)$.  Let $j$ be a color.  The result of an edge sign flip of color $j$ is a dashing described by $\mu_2=\mu_1+\epsilon_j$.  This resulting dashing is odd.  If $G$ is the generating matrix for the code, then $s(\mu_2)$ is $s(\mu_1)$ plus the $j$th column of $G$, taken modulo 2.
\label{prop:edgesignflip}
\end{proposition}

\begin{proof}
Recall that $\epsilon_j$ takes the value $1$ on every edge of color $j$, and $0$ on all other edges.  Therefore, adding it modulo 2 to $\mu_1$ reverses every edge of color $j$.  We recall that $d\epsilon_j=0$, so $d(\mu_1+\epsilon_j)=d\mu_1$, so $\mu_2$ is odd if $\mu_1$ was.  For each $i$ we see that $s_i(\mu_2)-s_i(\mu_1)=\epsilon_j(P_{c_i}) = (c_i)_j = G_{ij}$.
\end{proof}

More generally, we can consider a series of edge sign flips: for every word $x=(x_1,\ldots,x_N)\in(\zt)^N$ we perform an edge sign flip of every color $j$ where $x_j=1$.

\begin{theorem}\label{thm:edgesignflip}
Let $A$ be an Adinkra that is an $N$-cube quotiented by a code $C$, and suppose $A$ has odd dashing $\mu_0$.  Let $x\in (\zt)^N$ be a word of length $N$.  The result of performing a series of edge sign flips using the colors $j$ where $x_j=1$ on $\mu_0$ results in another odd dashing $\mu_x$.

If we write $x$ and $s(\mu)$ as column vectors, and $G$ is the generating matrix for the code, then
\begin{equation}
s(\mu_x)\equiv s(\mu_0)+Gx\pmod{2}.\label{eqn:matrixedgesignflip}
\end{equation}
\end{theorem}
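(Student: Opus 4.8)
The plan is to reduce the statement to an iterated application of Proposition~\ref{prop:edgesignflip}, exploiting the fact that each single-color edge sign flip is implemented by adding the cocycle $\epsilon_j$ and that these additions commute. First I would observe that performing an edge sign flip of color $j$ for every $j$ with $x_j=1$ amounts to replacing $\mu_0$ by
\[
\mu_x = \mu_0 + \sum_{j\,:\,x_j=1}\epsilon_j.
\]
Because addition in $C^1(A)$ is commutative and each color with $x_j=1$ is flipped exactly once, the order in which the individual flips are performed is immaterial, so $\mu_x$ is well-defined.

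Next I would establish oddness. Since $d\epsilon_j=0$ for every $j$ (proved just before Theorem~\ref{thm:w2}), the sum $\sum_{j:x_j=1}\epsilon_j$ is a cocycle, whence $d\mu_x = d\mu_0 = \omega_2$. By the characterization of odd dashings recorded in the proof of Theorem~\ref{thm:w2}, the condition $d\mu_x=\omega_2$ says exactly that $\mu_x$ is again an odd dashing.

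For the formula, I would compute $s$ componentwise from its definition $s_i(\mu)=\mu({}^{v_0}P_{c_i})\pmod 2$ together with the identity $\epsilon_j({}^{v_0}P_{c_i}) = (c_i)_j = G_{ij}$ already extracted in Proposition~\ref{prop:edgesignflip}. Using linearity of the map $\mu\mapsto \mu({}^{v_0}P_{c_i})$,
\[
s_i(\mu_x) = s_i(\mu_0) + \sum_{j\,:\,x_j=1} \epsilon_j({}^{v_0}P_{c_i}) = s_i(\mu_0) + \sum_{j=1}^N G_{ij}\,x_j,
\]
all taken modulo $2$. The final sum is precisely entry $i$ of the matrix--vector product $Gx$, so reassembling the components yields $s(\mu_x)\equiv s(\mu_0)+Gx \pmod 2$, as claimed.

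The main obstacle---such as it is---is purely bookkeeping rather than conceptual: one must confirm that the single-flip increment from Proposition~\ref{prop:edgesignflip} accumulates additively (which it does, since each flip contributes a fixed column of $G$ and mod-$2$ addition is associative and commutative) and then recognize $\sum_j G_{ij}x_j$ as entry $i$ of $Gx$. The only subtlety worth flagging is that flipping a color twice cancels, so that only the support of $x$ contributes, which is exactly why restricting the sum to indices with $x_j=1$ is equivalent to the full sum $\sum_{j=1}^N G_{ij}x_j$ over $\zt$.
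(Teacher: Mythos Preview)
Your proposal is correct and follows essentially the same route as the paper: write $\mu_x=\mu_0+\sum_j x_j\epsilon_j$, use $d\epsilon_j=0$ to retain oddness, evaluate on ${}^{v_0}P_{c_i}$ using $\epsilon_j(P_{c_i})=G_{ij}$, and assemble the components into the matrix identity. One small citation slip: the fact $d\epsilon_j=0$ is established in the Proposition of Section~\ref{sec:adinkracohomology} (more precisely, in the subsection defining the $\epsilon_j$), not ``just before Theorem~\ref{thm:w2}''.
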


\begin{proof}
The result of the series of edge sign flips using $x$ on $\mu_0$ is
\[
\mu_x\equiv \mu_0+\sum_{j=1}^N x_j \epsilon_j.
\]
If $w=(w_1,\ldots,w_N)\in C$ is a codeword, we take the above equation and evaluate it on the 1-chain $P_w$, and the result is
\[
s_a(\mu_x)\equiv s_a(\mu_0)+\sum_{j=1}^N x_j w_j \pmod{2}
\]
for all $a$.  In particular for the $i$th generating word of $G$,
\[w_j = G_{i,j},\]
and then (\ref{eqn:matrixedgesignflip}) follows from writing this statement in matrix form.
\end{proof}

We are now able to find an odd dashing for each $k$-tuple $(s_1,\ldots,s_k)$, using the following:
\begin{algorithm}
Input:
\begin{itemize}
\item a generating matrix $G$ for a code $C$
\item a vertex $v_0$ of the Adinkra $A=I^N/C$
\item an odd dashing $\mu_0$ on $A$
\item a $k$-tuple $(s_1,\ldots,s_k)\in (\zt)^N$
\end{itemize}

Output: an odd dashing $\mu$ on $A$ with ${}^{v_0}s(\mu)=(s_1,\ldots,s_k)$

Procedure:

Use Gauss--Jordan elimination on $G$.  Then $G$ is in reduced row-echelon form, with leading 1s in columns $j_1<\cdots<j_k$.  Compute ${}^{v_0}s(\mu_0)$.  Let
\[
(\sigma_1,\ldots,\sigma_k)=(s_1,\ldots,s_k)-s(\mu_0)\pmod{2}.
\]
Perform an edge sign flip on $\mu_0$ for color $j_i$ for every $i$ with $\sigma_i=1$.  Let $\mu$ be the resulting dashing after all these edge flips are accomplished.

By Theorem~\ref{thm:edgesignflip},
\[{}^{v_0}s(\mu)={}^{v_0}s(\mu_0)+(\sigma_1,\ldots,\sigma_k),\]
which is $(s_1,\ldots,s_k)$.
\end{algorithm}

Edge sign flips are powerful enough to generate all of the different odd dashings up to vertex switches, as we see in the following theorem.

\begin{corollary}
Let $A$ be a connected Adinkra, and let $\mu_1$ and $\mu_2$ be two odd dashings on $A$.  There is a sequence of edge sign flips and vertex switches that turns $\mu_1$ into $\mu_2$.
\end{corollary}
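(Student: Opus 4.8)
The plan is to combine the edge-sign-flip calculus of Theorem~\ref{thm:edgesignflip} with the completeness of the invariant $s$ from Theorem~\ref{thm:dashingfunction}, reaching $\mu_2$ from $\mu_1$ in two stages. First I would invoke the structural result \cite{rAT0} that a connected Adinkra $A$ is the quotient of some $N$-cube by a doubly even code $C$; since the nonzero codewords of a doubly even code all have weight at least $4$, the hypothesis of Theorem~\ref{thm:homology} is met, and hence the bijectivity of $s$ asserted in Theorem~\ref{thm:dashingfunction} is available. Fix a vertex $v_0$ and a generating matrix $G$ for $C$, so that $s(\mu_1)$ and $s(\mu_2)$ are well-defined elements of $(\zt)^k$.

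For the first stage I would use edge sign flips to match the invariant. Because $G$ is a $k\times N$ generating matrix, its rows are linearly independent, so $G$ has rank $k$ and the map $x\mapsto Gx$ from $(\zt)^N$ to $(\zt)^k$ is surjective; hence there is a word $x\in(\zt)^N$ with $Gx\equiv s(\mu_2)-s(\mu_1)\pmod 2$. Performing the series of edge sign flips indicated by the nonzero entries of $x$ yields, by Theorem~\ref{thm:edgesignflip}, an odd dashing $\mu_x$ with
\[
s(\mu_x)\equiv s(\mu_1)+Gx\equiv s(\mu_2)\pmod 2.
\]

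For the second stage, since $s(\mu_x)=s(\mu_2)$ and $s$ is injective on odd dashings modulo vertex switches by Theorem~\ref{thm:dashingfunction}, the dashings $\mu_x$ and $\mu_2$ lie in the same vertex-switch class. By Theorem~\ref{thm:classifydashings} this means $\mu_2-\mu_x=df$ for some $0$-cochain $f$, and Proposition~\ref{prop:vertexsignflip} realizes this $f$ as an explicit sequence of vertex switches carrying $\mu_x$ to $\mu_2$. Concatenating the edge sign flips of the first stage with these vertex switches turns $\mu_1$ into $\mu_2$, as desired.

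The one point requiring genuine care, and the step I would flag as the main obstacle, is the surjectivity of $x\mapsto Gx$: this is exactly what guarantees that edge sign flips can realize every possible change in the invariant $s$, and it is where the full-rank property of the generating matrix does the real work. Everything else is bookkeeping that follows directly from the cited theorems. I would also emphasize that the argument needs no assumption that $\mu_1$ and $\mu_2$ be close in any sense: the edge-sign-flip stage absorbs the discrepancy in $s$, while the vertex-switch stage absorbs the remaining coboundary part.
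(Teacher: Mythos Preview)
Your proof is correct and follows essentially the same two-stage approach as the paper: first use edge sign flips to match the invariant $s$, then invoke Theorem~\ref{thm:dashingfunction} to conclude via vertex switches. The only cosmetic difference is that the paper packages the first stage as an application of the preceding Algorithm (which uses Gauss--Jordan elimination on $G$ to select pivot columns), whereas you argue the surjectivity of $x\mapsto Gx$ directly from the fact that $G$ has full row rank; these amount to the same thing.
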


\begin{proof}
Every connected Adinkra is a quotient of an $N$-cube by a code $C$.  Let $G$ be a generating matrix for $C$.  Pick a vertex $v_0$ in the Adinkra.  Apply the above Algorithm to $G$, $v$, $\mu_1$, and ${}^{v_0}s(\mu_2)-{}^{v_0}s(\mu_1)$, to find an odd dashing $\mu_3$ with $s(\mu_3)=s(\mu_2)$.  By Theorem~\ref{thm:dashingfunction}, $\mu_2$ can be obtained from $\mu_3$ by a series of vertex switches.
\end{proof}

\section{Changing the starting vertex $v_0$}
\label{sec:changev0}
We now have a complete invariant, ${}^{v_0}s(\mu)$, of the odd dashing.  But it depends on the starting vertex $v_0$.  Let us consider the effect of changing $v_0$ on the $k$-tuple ${}^{v_0}s(\mu)$.  As before, we assume the Adinkra $A$ is connected, and therefore, it is a quotient of $I^N$ by a code $C$.

Let $v_1$ be another vertex in $A$.  There is a path $P$ connecting $v_0$ to $v_1$.  If we follow the sequence of colors in $P$, we form a word $x$, where $x_j=1$ if and only if color $j$ appears in the path $P$.

\begin{theorem}
Suppose $A$ is an Adinkra, with odd dashing given by $\mu$.  Let $v_0$, $v_1$, and $x$ be as above.  If we write $x$ and $s(\mu)$ as column vectors, and $G$ is the generating matrix for the code, then
\[
{}^{v_1}s(\mu)\equiv{}^{v_0}s(\mu)+Gx\pmod{2}.
\]
\label{thm:switchv0}
\end{theorem}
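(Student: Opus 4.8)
The plan is to reduce to the case where $v_1$ is a single edge away from $v_0$, and then to exhibit an explicit $2$-chain interpolating between the two lifted cycles. Concretely, write $P$ as a sequence of vertices $v_0 = u_0, u_1, \ldots, u_L = v_1$ whose consecutive edges have colors $j^{(1)}, \ldots, j^{(L)}$. I would first prove the single-edge identity: if $u$ and $u'$ are joined by an edge of color $j$, then ${}^{u'}s(\mu) - {}^{u}s(\mu)$ equals $Ge_j$ (the $j$th column of $G$) modulo $2$, where $e_j$ is the $j$th standard basis vector. Applying this $L$ times along $P$ gives ${}^{v_1}s(\mu) - {}^{v_0}s(\mu) = G\bigl(\sum_{l} e_{j^{(l)}}\bigr)$; since $G$ is applied modulo $2$, any color traversed an even number of times drops out, and the sum collapses to $Gx$, where $x$ records the colors appearing in $P$. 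This is the asserted formula.

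For the single-edge step, fix the generator $c_i$ with support $j_1 < \cdots < j_r$, where $r = \wt(c_i)$, and lift $v_0$ to $x \in \{0,1\}^N$, so that $v_1$ lifts to $x + e_j$. The cycle ${}^{v_0}P_{c_i}$ visits $x_m = x + e_{j_1} + \cdots + e_{j_m}$ and ${}^{v_1}P_{c_i}$ visits $y_m = x_m + e_j$. For each step $m$ I would take the $\{j_m, j\}$-colored square $S_m$ with corners $x_{m-1}, x_m, y_m, y_{m-1}$, which exists and is unique by Lemma~\ref{lemma:squarelifting}, and set $\Sigma = \sum_m S_m$, discarding the one degenerate square that occurs when $j_m = j$. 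In computing $\partial\Sigma$ over $\zt$, the color-$j$ "rung" edges telescope: interior rungs cancel in pairs; the two end rungs both join $v_0$ to $v_1$ and hence, by color-regularity, are the same edge of $A$ and cancel; and in the degenerate case the two rungs flanking the missing square collapse onto the degenerate edge and likewise cancel. What survives is exactly ${}^{v_0}P_{c_i} + {}^{v_1}P_{c_i} = {}^{v_1}P_{c_i} - {}^{v_0}P_{c_i}$.

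Then, using $d\mu = \omega_2$, I would compute
\[
{}^{v_1}s_i(\mu) - {}^{v_0}s_i(\mu) = \mu\bigl({}^{v_1}P_{c_i} - {}^{v_0}P_{c_i}\bigr) = \mu(\partial\Sigma) = d\mu(\Sigma) = \omega_2(\Sigma),
\]
so the difference is just the number of $2$-faces of $\Sigma$ modulo $2$. This count is $r$ when $j \notin \{j_1,\ldots,j_r\}$ and $r-1$ when $j$ is one of them. Because $C$ is doubly even, $r \equiv 0 \pmod{4}$, so $\omega_2(\Sigma) \equiv 0$ in the first case and $\equiv 1$ in the second, matching $(c_i)_j = G_{ij}$. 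This gives the single-edge identity and hence the theorem.

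I expect the main obstacle to be the careful bookkeeping of $\partial\Sigma$, especially the degenerate square arising when the connecting color $j$ already lies in the support of $c_i$: one must verify that omitting that square leaves no stray boundary edges, which holds only because the four edges of the degenerate square collapse (by color-regularity) to a single edge of $A$ and cancel in pairs. The second delicate point is invoking double-evenness at precisely the right moment to turn the face count $r$ or $r-1$ into the correct parity; without $r \equiv 0 \pmod{4}$ the count would fail to reproduce $G_{ij}$.
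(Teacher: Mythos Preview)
Your approach is essentially the paper's: reduce to the single-edge case, build a ladder of $2$-faces interpolating between ${}^{v_0}P_{c_i}$ and ${}^{v_1}P_{c_i}$, and use $d\mu=\omega_2$ to turn the difference into a face count of $r$ or $r-1$ according to whether the connecting color lies in the support of $c_i$. One minor over-claim: you only need $r$ even, not $r\equiv 0\pmod 4$---since everything is taken modulo $2$, evenness of $r$ already yields $\omega_2(\Sigma)\equiv 0$ in the first case and $r-1\equiv 1$ in the second, so your remark that ``without $r\equiv 0\pmod 4$ the count would fail'' overstates the hypothesis actually used.
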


\begin{proof}
As in the proof of Theorem~\ref{thm:edgesignflip}, we actually prove that if $w=(w_1,\ldots,w_n)$ is a codeword in $C$, then
\begin{equation}
\mu\left({}^{v_1}P_w\right)
\equiv \mu\left({}^{v_0}P_w\right) + \sum_{j=1}^N x_j w_j\pmod{2}.
\label{eqn:switchv01}
\end{equation}
We take for $w$ each generator of $C$, that is, each row of the matrix $G$.  Gathering this into matrix notation, we then would have the desired result.

As before, let $j_1< \cdots < j_r$ be the integers so that $w_{j_i}=1$.

To prove (\ref{eqn:switchv01}), first consider the situation where the $x$ is all zeros except for a single 1 at position $m$.  That is, $x_m=1$ and $x_i=0$ for all $i\not=m$.

Case 1: $w_m=0$

If $w_m=0$, then consider the following diagram.
\begin{center}
\begin{picture}(280,60)(-10,-10)
\put(0,0){\line(1,0){100}}
\put(0,40){\line(1,0){100}}
\put(140,0){\line(1,0){100}}
\put(140,40){\line(1,0){100}}
\put(0,0){\line(0,1){40}}
\put(40,0){\line(0,1){40}}
\put(80,0){\line(0,1){40}}
\put(160,0){\line(0,1){40}}
\put(200,0){\line(0,1){40}}
\put(240,0){\line(0,1){40}}
\put(120,20){\makebox[0in][c]{$\cdots$}}
\put(0,0){\circle*{4}}
\put(0,40){\circle*{4}}
\put(240,0){\circle*{4}}
\put(240,40){\circle*{4}}
\put(-2,-5){\makebox[0in][r]{$v_0$}}
\put(242,-5){\makebox[0in][l]{$v_0$}}
\put(-2,42){\makebox[0in][r]{$v_1$}}
\put(242,42){\makebox[0in][l]{$v_1$}}
\put(20,-9){\makebox[0in][c]{$j_1$}}
\put(60,-9){\makebox[0in][c]{$j_2$}}
\put(180,-9){\makebox[0in][c]{$j_{r-1}$}}
\put(220,-9){\makebox[0in][c]{$j_r$}}
\put(20,45){\makebox[0in][c]{$j_1$}}
\put(60,45){\makebox[0in][c]{$j_2$}}
\put(180,45){\makebox[0in][c]{$j_{r-1}$}}
\put(220,45){\makebox[0in][c]{$j_r$}}
\put(-2,18){\makebox[0in][r]{$m$}}
\put(38,18){\makebox[0in][r]{$m$}}
\put(78,18){\makebox[0in][r]{$m$}}
\put(158,18){\makebox[0in][r]{$m$}}
\put(198,18){\makebox[0in][r]{$m$}}
\put(238,18){\makebox[0in][r]{$m$}}
\end{picture}
\end{center}
Use Lemma~\ref{lemma:squarelifting} iteratively to map this figure into $A$ with the bottom left vertex going to $v_0$, and the edges going to edges colored with the labels shown.  The leftmost edge connects $v_0$ to $v_1$.  Then the bottom edge of this diagram goes to the path ${}^{v_0}P_w$, and the top edge goes to the path ${}^{v_1}P_w$.  Both of these paths are circuits, and so the left edge of this diagram matches the right edge.  Let $F$ be the sum of the faces that this diagram gets sent to in $A$.  The boundary of $F$ is ${}^{v_0}P_w+{}^{v_1}P_w$.  Therefore we have (mod 2):
\begin{eqnarray*}
\mu({}^{v_1}P_w)
&\equiv& \mu({}^{v_0}P_w+\partial F)\\
&\equiv& \mu({}^{v_0}P_w)+\mu(\partial F)\\
&\equiv& \mu({}^{v_0}P_w)+d\mu(F)\\
&\equiv& \mu({}^{v_0}P_w)+\omega_2(F).
\end{eqnarray*}

But $\omega_2(F)$ is the number of 2-faces in $F$, which is $r$, the weight of the codeword $w$, which is even.  Therefore $\mu({}^{v_1}P_w) \equiv \mu({}^{v_0}P_w)\pmod{2}$.

Case 2: $w_m=1$

In this case, $j_n=m$ for some $n$, and we have the following diagram.
\begin{center}
\begin{picture}(320,80)(-10,-10)
\put(0,0){\line(1,0){60}}
\put(0,40){\line(1,0){60}}
\put(80,0){\line(1,0){120}}
\put(80,40){\line(1,0){120}}
\put(220,0){\line(1,0){60}}
\put(220,40){\line(1,0){60}}
\put(0,0){\line(0,1){40}}
\put(40,0){\line(0,1){40}}
\put(100,0){\line(0,1){40}}
\put(140,0){\line(0,1){40}}
\put(180,0){\line(0,1){40}}
\put(240,0){\line(0,1){40}}
\put(280,0){\line(0,1){40}}
\put(70,0){\makebox[0in][c]{$\cdots$}}
\put(210,20){\makebox[0in][c]{$\cdots$}}
\put(0,0){\circle*{4}}
\put(0,40){\circle*{4}}
\put(140,0){\circle*{4}}
\put(140,40){\circle*{4}}
\put(280,0){\circle*{4}}
\put(280,40){\circle*{4}}
\put(-4,-10){\makebox[0in][r]{$v_0$}}
\put(284,-10){\makebox[0in][l]{$v_1$}}
\put(-4,44){\makebox[0in][r]{$v_1$}}
\put(284,44){\makebox[0in][l]{$v_0$}}
\put(20,-10){\makebox[0in][c]{$j_1$}}
\put(260,-10){\makebox[0in][c]{$j_r$}}
\put(20,44){\makebox[0in][c]{$j_1$}}
\put(120,-10){\makebox[0in][c]{$j_{n-1}$}}
\put(120,44){\makebox[0in][c]{$j_{n-1}$}}
\put(160,-10){\makebox[0in][c]{$j_{n+1}$}}
\put(160,44){\makebox[0in][c]{$j_{n+1}$}}
\put(260,44){\makebox[0in][c]{$j_r$}}
\put(-4,16){\makebox[0in][r]{$m$}}
\put(36,16){\makebox[0in][r]{$m$}}
\put(96,16){\makebox[0in][r]{$m$}}
\put(136,16){\makebox[0in][c]{$m=j_n$}}
\put(184,16){\makebox[0in][l]{$m$}}
\put(244,16){\makebox[0in][l]{$m$}}
\put(284,16){\makebox[0in][l]{$m$}}
\put(140,-12){\makebox[0in][c]{$Q$}}
\put(140,44){\makebox[0in][c]{$P$}}
\end{picture}
\end{center}
Again, use Lemma~\ref{lemma:squarelifting} iteratively to map this figure into $A$ with the bottom left vertex going to $v_0$, and the edges going to edges colored with the labels shown.  The leftmost edge connects $v_0$ to $v_1$.

The path ${}^{v_0}P_w$ is in the image of the path that goes from the lower left corner along the lower edge until the point $Q$, then going up to $P$, then going across the top edge to the point in the upper right.  Similarly, the path ${}^{v_1}P_w$ is in the image of the path that goes along the top from the upper left to $P$, then down to $Q$, then along the bottom edge to the lower right.  Again, let $F$ be the sum of the faces that this diagram gets sent to in $A$.  As before,
\[
\mu({}^{v_1}P_w)\equiv\mu({}^{v_0}P_w)+\omega_2(F)\pmod{2},
\]
but $\omega_2(F)$, the number of 2-faces in $F$, is now $r-1$, which is odd.  Therefore $\mu({}^{v_1}P_w)\equiv\mu({}^{v_0}P_w)+1$.

General case: We return to the situation where we consider words $x$ with $\wt(x)>1$.  We iterate the above procedure for every $j$ so that $x_j=1$.  Note that $\rho_x$ is the composition of the $\rho_j$ for which $x_j=1$.  The result follows.
\end{proof}

\subsection{Node choice symmetry}
We have seen how the $k$-tuple ${}^{v_0}s(\mu)=(s_1,\ldots,s_k)$ captures the properties of $\mu$ that are invariant under vertex switches.  But it depends on the choice of a vertex $v_0$, as we have just seen.  These vertices can sometimes be easily distinguished: for instance, some are bosons and others are fermions; some have many arrows pointing away from it, others have few; and so on.  But in other cases, there is no property that can distinguish them, and these are node choice symmetries.

\begin{definition}
A node choice symmetry of an Adinkra $A$ is a permutation of the vertices and edges in $A$ that preserves
\begin{enumerate}
\item the incidence relation between edges and vertices,
\item the bipartition $(V_0,V_1)$ of the vertices,
\item the colors of the edges, and
\item the orientation of the arrows on the edges.
\end{enumerate}
That is, it preserves all the features of the Adinkra except for the dashedness.
\end{definition}

Let us assume $A$ is connected.  By the color-regular property of Adinkras, criteria 1 and 3 in the definition guarantees that the permutation is determined uniquely by where it sends a given vertex $v_0$ (to, say, $v_1$).  The path from $v_0$ to $v_1$ can be described as a word $x\in (\zt)^N$.  The permutation given by $\rho_x$ satisfies criterion 1 and 3, and by uniqueness, must be the original permutation.  Therefore, every node choice symmetry is given by $\rho_x$ for some $x$.

Criterion 2 is equivalent to $\wt(x)$ being even, and criterion 4 can be described in terms of height assignments by saying that $\rho_x(v)$ and $v$ have the same height (equivalently, the same engineering dimension) for every vertex $v$.  This motivates the definition of the following code:

\begin{definition}
The node choice code $\ch$ of a connected Adinkra is the set of words $x\in(\zt)^N$ so that for all vertices $v$ in the Adinkra, $\rho_x(v)$ has the same engineering dimension as $v$.
\end{definition}

Note that $\ch$ is a group under addition modulo 2, and is therefore a linear code.  Suppose $A$ is a quotient of $I^N$ by the doubly even code $C$.  Then $C$ is a subgroup of $\ch$.  The map $x\mapsto \rho_x$ describes an action of $\ch$ on the Adinkra via node choice symmetries, and $C$ is the subgroup that acts trivially on the Adinkra.  All node choice symmetries are obtained in this way.\footnote{In \cite{rAT2} we defined the node choice group to be $\ch$, and a node choice symmetry to be an element of $\ch$, whereas by our definition, a node choice symmetry is given by the actual permutation $\rho_x$, which is determined by an element in $\ch/C$.  This distinction will not be relevant for our purposes.}  Criterion 2 implies $\ch$ is an even code.  There is a generating matrix $H$ whose rows are the generating words for $\ch$.

The point here is that by considering the invariant $k$-tuple ${}^{v_0}s(\mu)$, one might be under the impression that there are $2^k$ odd dashings on $A=I^N/C$, where $k$ is the dimension of the code $C$.  But the node choice symmetries act on the vertices, and therefore on the possible $k$-tuples by changing $v_0$.  If we identify the various possible $k$-tuples, the result will give us invariants that identify the odd dashings on an Adinkra modulo vertex switches.

\begin{theorem}
Let $A$ be a quotient of $I^N$ by a code $C$ with generating matrix $G$. 
Let $H$ be the generating matrix for the node choice code.  Let $\mu$ be an odd dashing on $A$, and let $v_0$, $v_1$ be vertices of $A$.  There exists a node choice symmetry from $v_0$ to $v_1$ if and only if there exists a column vector $y$ so that
\[
GH^Ty = {}^{v_1}s(\mu)-{}^{v_0}s(\mu).
\]
\end{theorem}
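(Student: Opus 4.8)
The plan is to translate ``there is a node choice symmetry from $v_0$ to $v_1$'' into a statement about the word $x$ recording a path from $v_0$ to $v_1$, and then feed that word through the base-point change formula of Theorem~\ref{thm:switchv0}. First I would assemble the two ingredients already in place. From the discussion preceding the node choice code, every node choice symmetry has the form $\rho_x$, it is an honest symmetry exactly when $x\in\ch$, and $\rho_x$ carries $v_0$ to the endpoint of the path from $v_0$ whose color word is $x$. Hence, letting $x$ be the word of any path from $v_0$ to $v_1$ (well defined modulo $C$), a node choice symmetry from $v_0$ to $v_1$ exists if and only if $x\in\ch$; and since the rows of $H$ generate $\ch$, this is the same as solvability of $x=H^Ty$. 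The second ingredient is Theorem~\ref{thm:switchv0}, giving ${}^{v_1}s(\mu)-{}^{v_0}s(\mu)\equiv Gx\pmod 2$. Note $Gx$ is independent of the chosen path, because two paths differ by a codeword $c\in C$ and $Gc=0$ since $C\subseteq C^\perp$ for a doubly even code; so the right-hand side of the theorem is genuinely a function of the pair $(v_0,v_1)$.

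For the ``only if'' direction I would argue directly. Given a node choice symmetry from $v_0$ to $v_1$, write $x=H^Ty$; then Theorem~\ref{thm:switchv0} yields ${}^{v_1}s(\mu)-{}^{v_0}s(\mu)=Gx=GH^Ty$, exhibiting the required $y$. This half is immediate and uses nothing but the row-space description of $\ch$ and the change-of-base-point formula.

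For the ``if'' direction, suppose $GH^Ty={}^{v_1}s(\mu)-{}^{v_0}s(\mu)$. By Theorem~\ref{thm:switchv0} the right-hand side equals $Gx$, so $G(x-H^Ty)=0$, i.e. $x-H^Ty\in\kernel G$. The target is $x\in\ch$, for then $\rho_x$ is the desired symmetry; since $H^Ty\in\ch$ and $C\subseteq\ch$, it suffices to push the vector $x-H^Ty$ from $\kernel G$ into $\ch$.

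This last push is the step I expect to be the main obstacle, and it is where the geometry of $A$, rather than pure linear algebra, must enter. The difficulty is that $\kernel G=C^\perp$, which contains $C$ but is in general strictly larger, so $x-H^Ty\in C^\perp$ is a priori weaker than the needed $x-H^Ty\in\ch$. The clean way to finish would be the containment $C^\perp\subseteq\ch$, equivalently that $\rho_z$ preserves every engineering dimension for $z\in C^\perp$; I would try to establish this by tracking the height change of a vertex along the lift of $z$ and relating it, via orthogonality of $z$ to the codewords that generate the closed color-loops of $A$, to a cancellation. I would treat this containment as the true crux of the ``if'' direction and verify it with care: it is automatic when $C$ is self-dual and holds in the valise case, but for a general orientation $\ch$ can be strictly smaller than even the full even code, so this is precisely the point where the argument must be pinned down before the equivalence can be asserted in full generality.
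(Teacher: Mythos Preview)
Your ``only if'' direction is exactly the paper's proof: write $x=H^Ty$ since $x\in\ch$, then apply Theorem~\ref{thm:switchv0} to get ${}^{v_1}s(\mu)-{}^{v_0}s(\mu)=Gx=GH^Ty$. That is literally all the paper does; its proof addresses only this implication.

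Your doubts about the ``if'' direction are well founded, and in fact the approach you sketch cannot succeed because the containment $C^\perp\subseteq\ch$ is false in general. Take the paper's own first $N=6$ example: the Adinkra has a unique lowest vertex, so every node choice symmetry fixes it and hence is the identity; thus $\ch=C$, which has dimension~$2$, while $C^\perp$ has dimension~$4$. Worse, the biconditional itself fails in this example. There are $16$ vertices but only $4$ possible values of ${}^{v}s(\mu)\in(\zt)^2$, so by pigeonhole there exist distinct bosons $v_0\neq v_1$ with ${}^{v_1}s(\mu)-{}^{v_0}s(\mu)=0=GH^T\cdot 0$; yet no non-trivial node choice symmetry exists, so there is none taking $v_0$ to $v_1$.

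What the paper actually \emph{uses} downstream (in the examples and Algorithm~2) is only the forward implication together with its easy partial converse: every column of $GH^T$ is realized by \emph{some} node choice symmetry (namely $\rho_{H^Ty}$), so the set of differences $\{{}^{v_1}s(\mu)-{}^{v_0}s(\mu):\rho\text{ a node choice symmetry with }\rho(v_0)=v_1\}$ is exactly the column space of $GH^T$. That weaker statement is correct and is what your ``only if'' paragraph plus one more line establishes. You should not try to salvage the literal ``if'' direction for arbitrary pairs $(v_0,v_1)$.
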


\begin{proof}
If $\rho_x$ is a node choice symmetry, then since the columns of $H^T$ generate the node choice group, $x$ can be written as $H^Ty$ for some column vector $y$.  The result then follows from Theorem~\ref{thm:switchv0}.
\end{proof}

\begin{example}
Consider the $N=6$ example given below.  This is an Adinkra but the arrows have been removed for the sake of clarity.  Instead, the vertices have been placed at heights so that every arrow is assumed to point upward along its edge.  The fact that this can be done is explained in \cite{r6-1}.

\begin{center}
\begin{picture}(180,100)(0,0)
\put(0,0){\includegraphics[height=100pt]{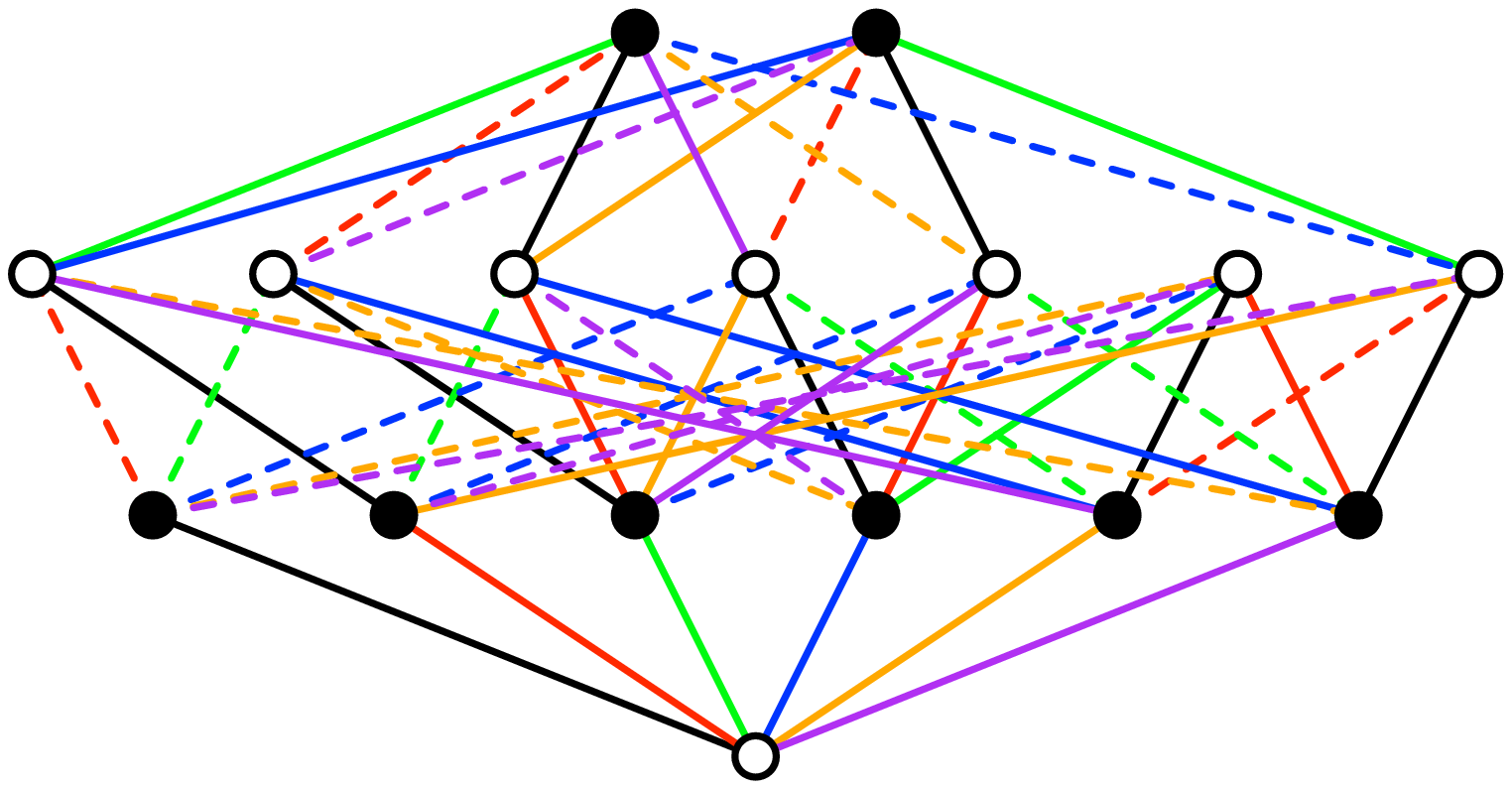}}
\put(88,-1){\makebox[0in][c]{$A$}}
\end{picture}
\end{center}
\begin{center}
Color key:

\begin{tabular}{c|c}
$j$&$j$th color\\\hline
$1$&black\\
$2$&orange\\
$3$&red\\
$4$&purple\\
$5$&blue\\
$6$&green
\end{tabular}
\end{center}

The code is $d_6$, generated by $111100$ and $001111$.  The generating matrix is thus

\[
G=\left[\begin{array}{cccccc}
1&1&1&1&0&0\\
0&0&1&1&1&1
\end{array}\right].
\]

Since $d_6$ is two-dimensional, there should be $2^2=4$ different dashings.  The matrix $G$ is not in reduced row-echelon form, but since we have $\left[\begin{array}{c}1\\ 0\end{array}\right]$ and 
$\left[\begin{array}{c}0\\ 1\end{array}\right]$
as colums, we can see that the other 3 dashings can be obtained by edge sign flips with color 1, color 5, and with both, as illustrated below.

\begin{center}
\begin{picture}(60,72)(0,-10)
\put(0,0){\includegraphics[height=50pt]{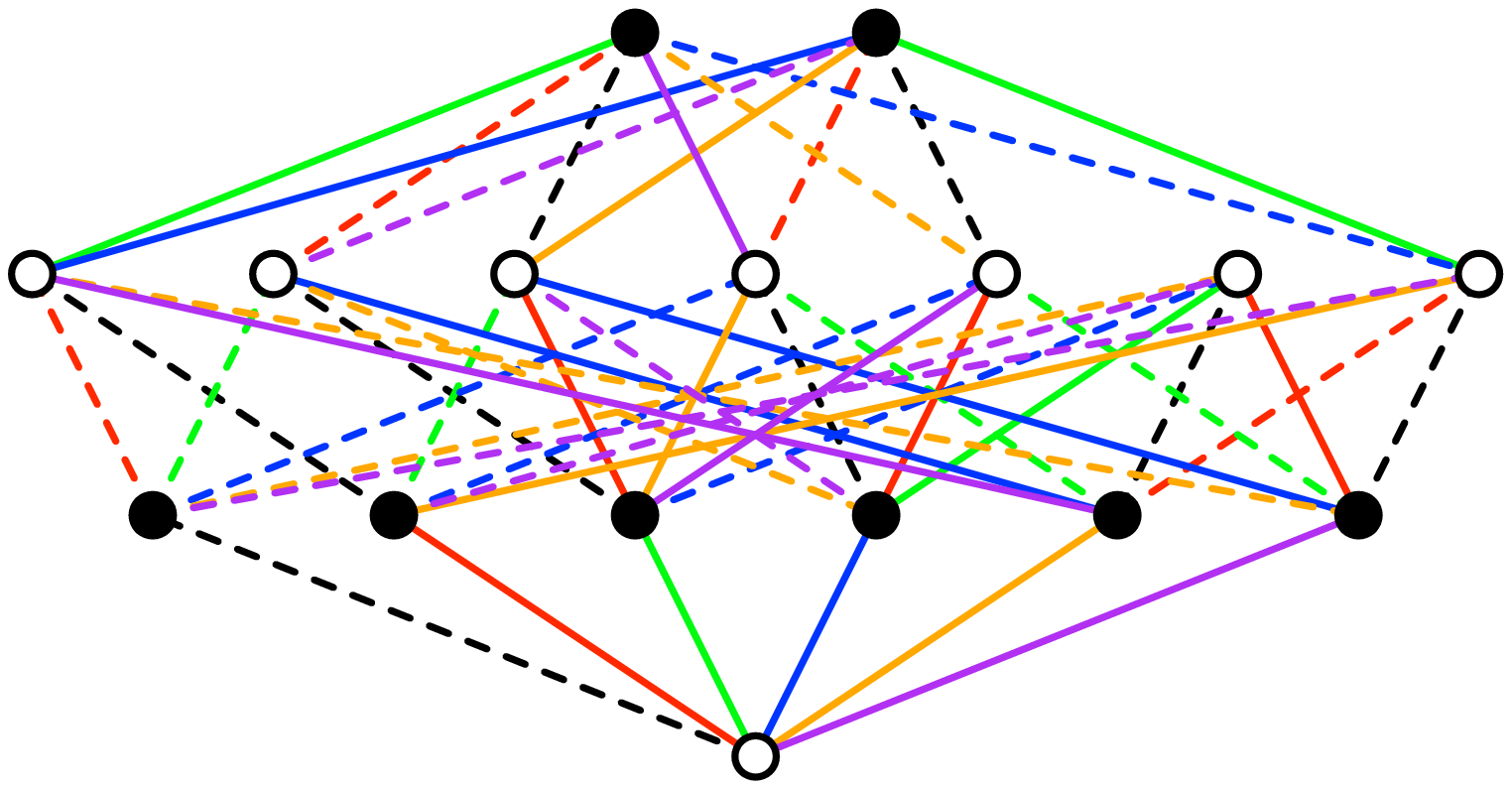}}
\end{picture}
\makebox[.5in]{}
\begin{picture}(60,72)(0,-10)
\put(0,0){\includegraphics[height=50pt]{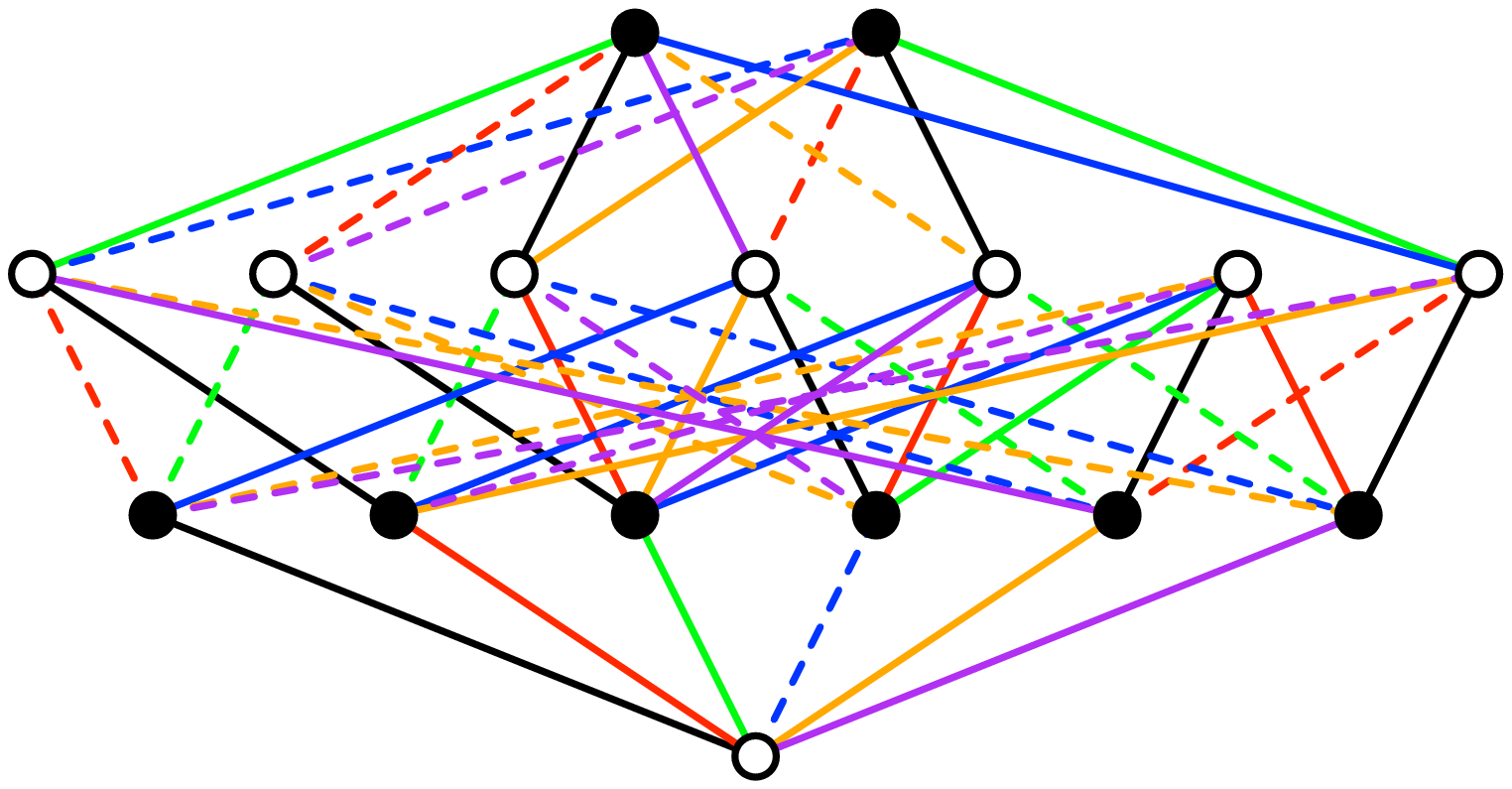}}
\end{picture}
\makebox[.5in]{}
\begin{picture}(60,72)(0,-10)
\put(0,0){\includegraphics[height=50pt]{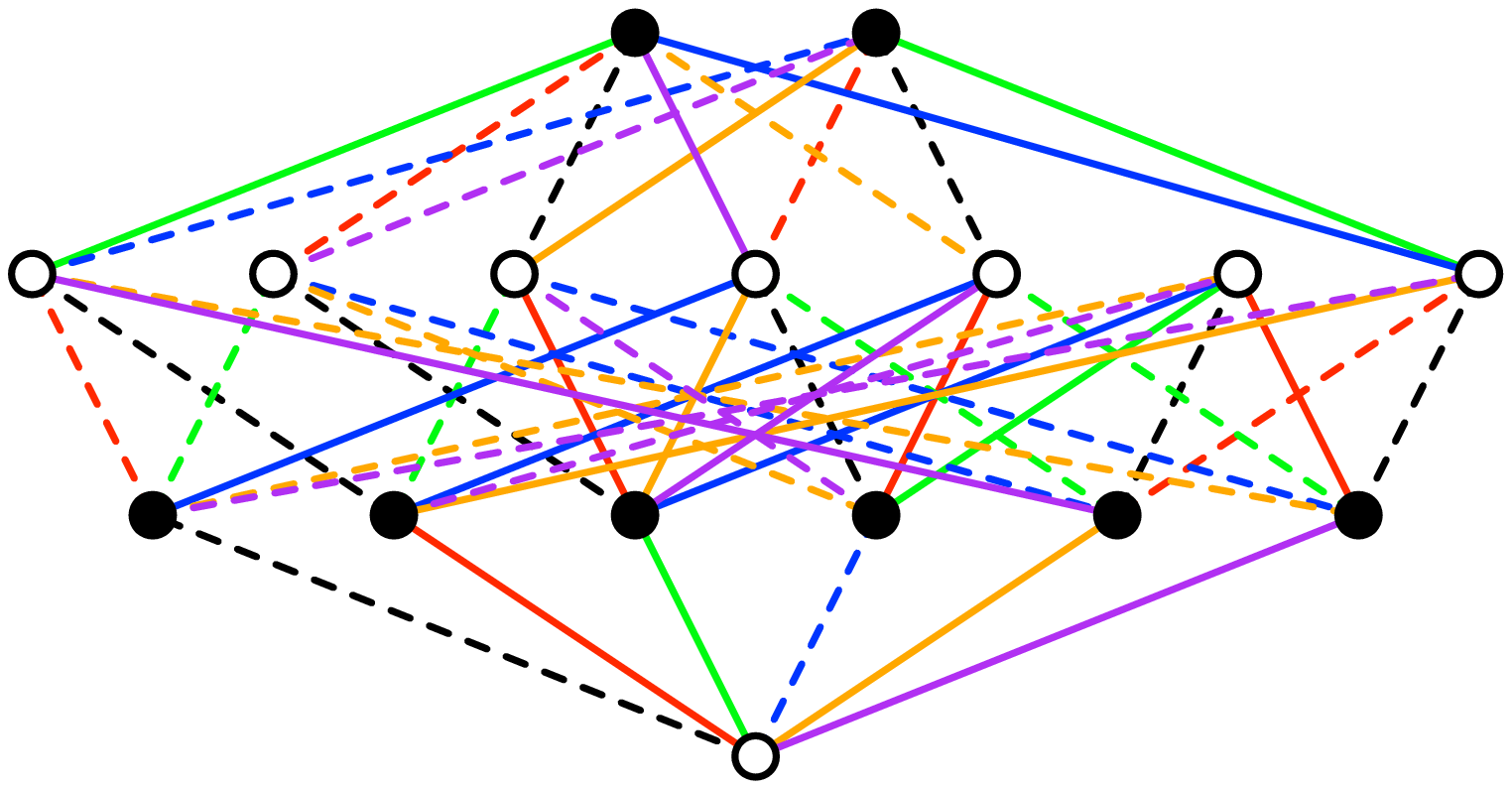}}
\end{picture}

\end{center}

There are no node choice symmetries, since any such symmetry must send the bottom node to itself.  Because a node choice symmetry is determined by where it sends any one vertex, the only node choice symmetry is the identity.

Take $v_0$ to be the lowest vertex, labelled $A$ in the diagram.  We trace a path with colors 1, 2, 3, then 4 (in that order) from $v_0$ and count the number of dashed edges, modulo 2.  This is $s_1$.  We do the same with the colors 3, 4, 5, then 6 to get $s_2$.
\end{example}

\begin{example}\label{ex:d62}
Now consider a different hanging of the same $N=6$ Adinkra:
\begin{center}
\begin{picture}(290,100)(0,0)
\put(0,0){\includegraphics[height=100pt]{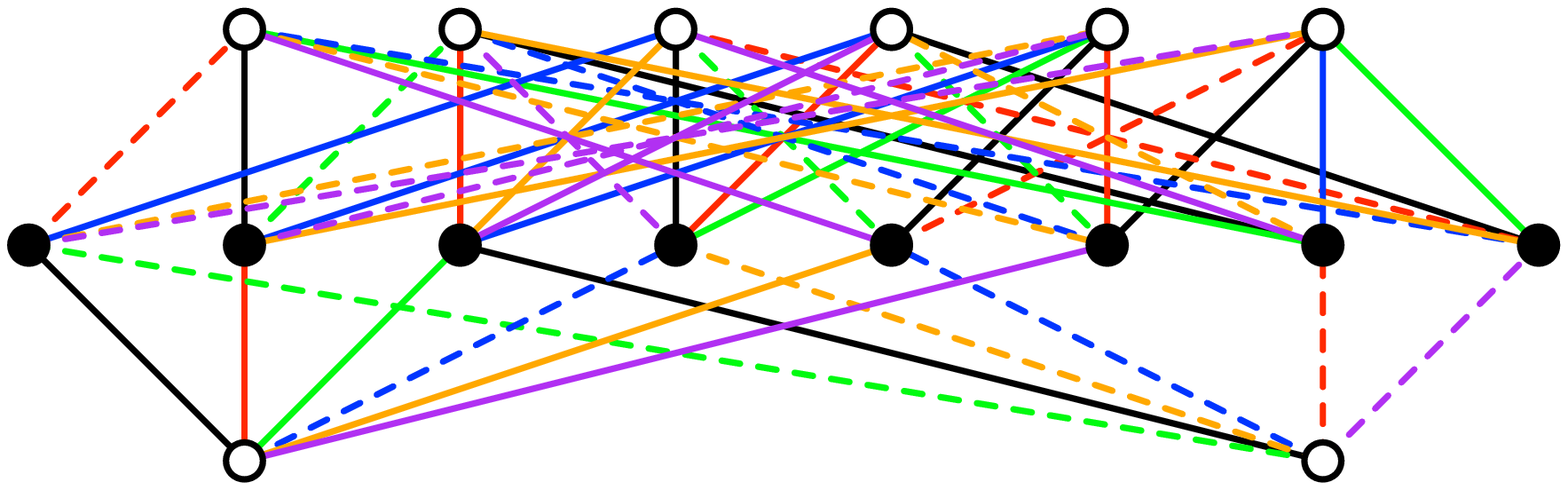}}
\put(46,0){$A$}
\put(240,0){$B$}
\end{picture}
\end{center}
In this case, $G$ is the same, but there is now a node choice symmetry that sends $A$ to $B$.  This can be described as a path from $A$ to $B$ following color 1 (black), then color 6 (green).  Hence, the matrix $H$ is
\[
H=\left[\begin{array}{cccccc}
1&1&1&1&0&0\\
0&0&1&1&1&1\\
1&0&0&0&0&1
\end{array}\right].
\]
The first two rows are simply the rows in $G$, to acknowledge that $C$ is a subgroup of $\ch$; they will be irrelevant anyway.  We compute:
\[
GH^T=\left[\begin{array}{ccc}
0&0&1\\
0&0&1
\end{array}\right].
\]
Therefore the effect of a node choice symmetry on $(s_1,s_2)$ is to either do nothing, or to toggle both $s_1$ and $s_2$.  We can thus use $s_1+s_2$ as a complete invariant of dashings, modulo vertex switches and node choice symmetries.
\end{example}

\begin{example}
We now consider the same $N=6$ Adinkra, but hung as a valise; that is, all bosons on one level, and all fermions on the other.  In this case, the arrows go from bosons to fermions, which in \cite{rA} was termed a {\em base Adinkra}.\footnote{A similar example could be obtained if the arrows go from fermions to bosons; this is the Klein flip of a base Adinkra.  A base Adinkra or its Klein flip is called a Valise Adinkra.}
\begin{center}
\begin{picture}(330,95)(-30,-10)
\put(0,0){\includegraphics[height=70pt]{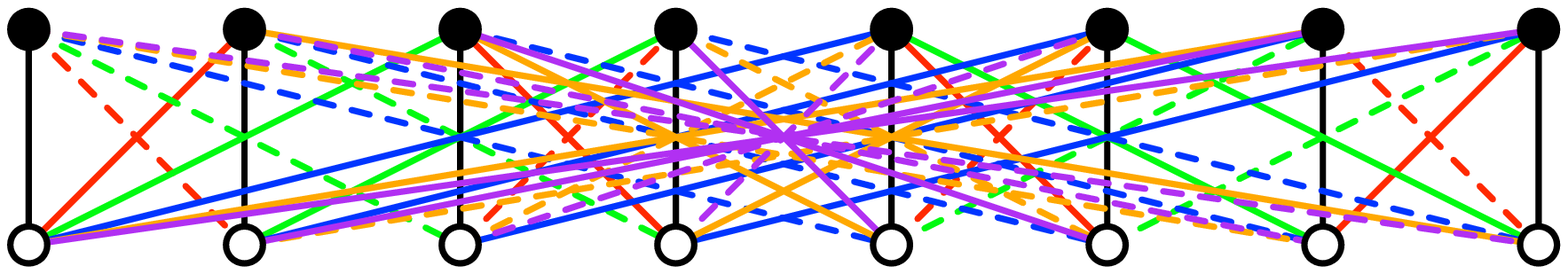}}
\put(-8,-10){\makebox[0in][r]{$(s_1,s_2)$}:}
\put(-8,65){\makebox[0in][r]{$(s_1,s_2)$}:}
\put(14,0){\makebox[0in][c]{$\phi_1$}}
\put(14,-10){\makebox[0in][c]{(1,0)}}
\put(55,0){\makebox[0in][c]{$\phi_2$}}
\put(55,-10){\makebox[0in][c]{(1,1)}}
\put(96,0){\makebox[0in][c]{$\phi_3$}}
\put(96,-10){\makebox[0in][c]{(0,1)}}
\put(137,0){\makebox[0in][c]{$\phi_4$}}
\put(137,-10){\makebox[0in][c]{(0,0)}}
\put(178,0){\makebox[0in][c]{$\phi_5$}}
\put(178,-10){\makebox[0in][c]{(0,1)}}
\put(219,0){\makebox[0in][c]{$\phi_6$}}
\put(219,-10){\makebox[0in][c]{(0,0)}}
\put(260,0){\makebox[0in][c]{$\phi_7$}}
\put(260,-10){\makebox[0in][c]{(1,0)}}
\put(301,0){\makebox[0in][c]{$\phi_8$}}
\put(301,-10){\makebox[0in][c]{(1,1)}}
\put(14,75){\makebox[0in][c]{$\psi_1$}}
\put(14,65){\makebox[0in][c]{(0,0)}}
\put(55,75){\makebox[0in][c]{$\psi_2$}}
\put(55,65){\makebox[0in][c]{(0,1)}}
\put(96,75){\makebox[0in][c]{$\psi_3$}}
\put(96,65){\makebox[0in][c]{(1,1)}}
\put(137,75){\makebox[0in][c]{$\psi_4$}}
\put(137,65){\makebox[0in][c]{(1,0)}}
\put(178,75){\makebox[0in][c]{$\psi_5$}}
\put(178,65){\makebox[0in][c]{(1,1)}}
\put(219,75){\makebox[0in][c]{$\psi_6$}}
\put(219,65){\makebox[0in][c]{(1,0)}}
\put(260,75){\makebox[0in][c]{$\psi_7$}}
\put(260,65){\makebox[0in][c]{(0,0)}}
\put(301,75){\makebox[0in][c]{$\psi_8$}}
\put(301,65){\makebox[0in][c]{(0,1)}}
\end{picture}
\end{center}
Given any two bosons, there is a node choice symmetry that takes one to the other.  Thus, the node choice group is the set of even weight words.  This is generated by words of weight 2 where the 1s are adjacent.
\[
H=\left[\begin{array}{cccccc}
1&1&0&0&0&0\\
0&1&1&0&0&0\\
0&0&1&1&0&0\\
0&0&0&1&1&0\\
0&0&0&0&1&1
\end{array}\right].
\]
Then we have
\[
GH^T=\left[\begin{array}{ccccc}
0&0&0&1&0\\
0&1&0&0&0
\end{array}\right],
\]
and note that the columns span the whole space $(\zt)^2$.  In particular, a node symmetry given by $y=(0,0,0,1,0)$, corresponding to $x=H^Ty=(0,0,0,1,1,0)$, involving $\rho_4$ followed by $\rho_5$, will change the value of $s_1$ without changing $s_2$.  Likewise, $y=(0,1,0,0,0)$, corresponding to $x=H^Ty=(0,1,1,0,0,0)$, involving $\rho_2$ followed by $\rho_3$, will change the value of $s_2$ without changing $s_1$.  And doing both changes both $s_1$ and $s_2$.  Therefore, the valise Adinkra for $d_6$ has only one dashing up to node choice symmetries.
\end{example}

\begin{example}\label{ex:d4s}
In the case $N=4$ with the code $d_4$ generated by $1111$, consider a valise Adinkra.  There are two dashings, shown below.
\begin{center}
\begin{picture}(120,90)(0,-10)
\put(0,0){\includegraphics[height=60pt]{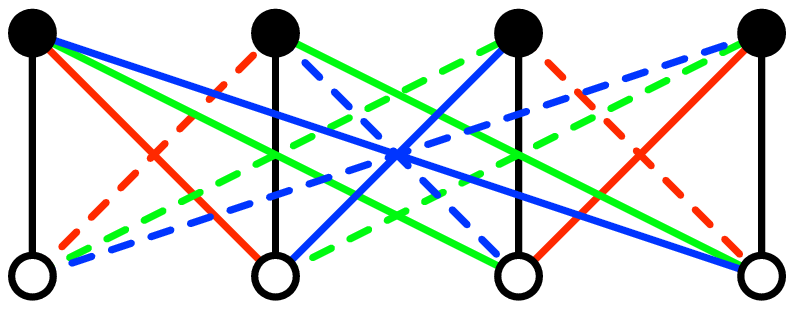}}
\put(12,0){\makebox[0in][c]{$\phi_1$}}
\put(47,0){\makebox[0in][c]{$\phi_2$}}
\put(82,0){\makebox[0in][c]{$\phi_3$}}
\put(117,0){\makebox[0in][c]{$\phi_4$}}
\put(12,-10){\makebox[0in][c]{(0)}}
\put(47,-10){\makebox[0in][c]{(0)}}
\put(82,-10){\makebox[0in][c]{(0)}}
\put(117,-10){\makebox[0in][c]{(0)}}
\put(12,67){\makebox[0in][c]{$\psi_1$}}
\put(12,57){\makebox[0in][c]{(1)}}
\put(47,67){\makebox[0in][c]{$\psi_2$}}
\put(47,57){\makebox[0in][c]{(1)}}
\put(82,67){\makebox[0in][c]{$\psi_3$}}
\put(82,57){\makebox[0in][c]{(1)}}
\put(117,67){\makebox[0in][c]{$\psi_4$}}
\put(117,57){\makebox[0in][c]{(1)}}
\put(-14,30){$A_1$:}
\end{picture}
\makebox[1in]{}
\begin{picture}(120,90)(0,-10)
\put(0,0){\includegraphics[height=60pt]{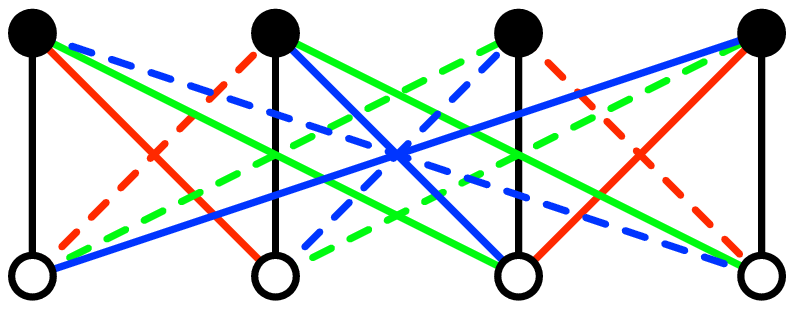}}
\put(12,0){\makebox[0in][c]{$\phi_1$}}
\put(47,0){\makebox[0in][c]{$\phi_2$}}
\put(82,0){\makebox[0in][c]{$\phi_3$}}
\put(117,0){\makebox[0in][c]{$\phi_4$}}
\put(12,-10){\makebox[0in][c]{(1)}}
\put(47,-10){\makebox[0in][c]{(1)}}
\put(82,-10){\makebox[0in][c]{(1)}}
\put(117,-10){\makebox[0in][c]{(1)}}
\put(12,67){\makebox[0in][c]{$\psi_1$}}
\put(12,57){\makebox[0in][c]{(0)}}
\put(47,67){\makebox[0in][c]{$\psi_2$}}
\put(47,57){\makebox[0in][c]{(0)}}
\put(82,67){\makebox[0in][c]{$\psi_3$}}
\put(82,57){\makebox[0in][c]{(0)}}
\put(117,67){\makebox[0in][c]{$\psi_4$}}
\put(117,57){\makebox[0in][c]{(0)}}
\put(-14,30){$A_2$:}
\end{picture}
\end{center}
\vspace{10pt}
\begin{center}
Color key:

\begin{tabular}{c|c}
$j$&$j$th color\\\hline
$1$&black\\
$2$&red\\
$3$&green\\
$4$&blue
\end{tabular}
\end{center}
The numbers in the parentheses are the values of $s_1(\mu)$ at each vertex.  For the Adinkra $A_1$ on the left, every boson has $s_1=0$ and every fermion has $s_1=1$.  We do an edge sign flip on any edge (in this case, blue) and obtain the Adinkra $A_2$ on the right, where every boson has $s_1=0$ and every fermion has $s_1=0$.  Since every node choice symmetry preserves the bipartition of the vertices, no node choice symmetry can swap the two, and so $A_1$ and $A_2$ are really different.  Note that $GH^T$ is
$\left[\begin{array}{ccc}
0&0&0
\end{array}\right]$,
which confirms this result.  If we do a Klein flip on $A_1$ (swap $V_0$ and $V_1$ in the bipartition), then do vertex lifts so that the bosons are still on the bottom of the diagram, the result will be (up to vertex switches) $A_2$.

The distinction between the two dashings is the same as that between the chiral and the twisted chiral superfield, except that the chiral and twisted chiral superfields are not Valise Adinkras.
\end{example}

\begin{example}
Consider the valise Adinkra for $N=8$ with $C=e_8$:
\begin{center}
\begin{picture}(240,104)(-25,-20)
\put(0,0){\includegraphics[height=60pt]{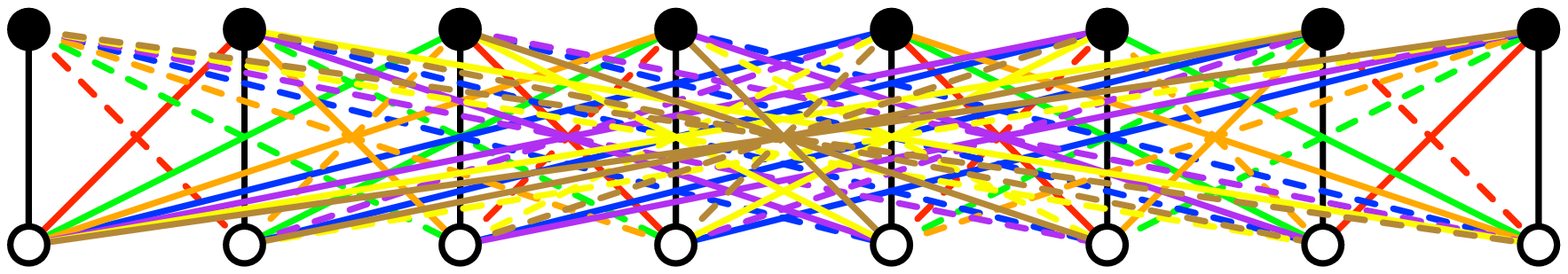}}
\put(-16,-20){\makebox[0in][r]{$s_1+s_3$}:}
\put(-16,54){\makebox[0in][r]{$s_1+s_3$}:}
\put(-16,-10){\makebox[0in][r]{$(s_1,s_2,s_3,s_4)$}:}
\put(-16,64){\makebox[0in][r]{$(s_1,s_2,s_3,s_4)$}:}
\put(12,0){\makebox[0in][c]{$\phi_1$}}
\put(12,-10){\makebox[0in][c]{(1,1,0,1)}}
\put(12,-20){\makebox[0in][c]{1}}
\put(47,0){\makebox[0in][c]{$\phi_2$}}
\put(47,-10){\makebox[0in][c]{(1,1,0,0)}}
\put(47,-20){\makebox[0in][c]{1}}
\put(82,0){\makebox[0in][c]{$\phi_3$}}
\put(82,-10){\makebox[0in][c]{(1,0,0,1)}}
\put(82,-20){\makebox[0in][c]{1}}
\put(117,0){\makebox[0in][c]{$\phi_4$}}
\put(117,-10){\makebox[0in][c]{(1,0,0,0)}}
\put(117,-20){\makebox[0in][c]{1}}
\put(152,0){\makebox[0in][c]{$\phi_5$}}
\put(152,-10){\makebox[0in][c]{(0,0,1,1)}}
\put(152,-20){\makebox[0in][c]{1}}
\put(187,0){\makebox[0in][c]{$\phi_6$}}
\put(187,-10){\makebox[0in][c]{(0,0,1,0)}}
\put(187,-20){\makebox[0in][c]{1}}
\put(222,0){\makebox[0in][c]{$\phi_7$}}
\put(222,-10){\makebox[0in][c]{(0,1,1,1)}}
\put(222,-20){\makebox[0in][c]{1}}
\put(257,0){\makebox[0in][c]{$\phi_8$}}
\put(257,-10){\makebox[0in][c]{(0,1,1,0)}}
\put(257,-20){\makebox[0in][c]{1}}
\put(12,74){\makebox[0in][c]{$\psi_1$}}
\put(12,64){\makebox[0in][c]{(0,1,0,0)}}
\put(12,54){\makebox[0in][c]{0}}
\put(47,74){\makebox[0in][c]{$\psi_2$}}
\put(47,64){\makebox[0in][c]{(0,1,0,1)}}
\put(47,54){\makebox[0in][c]{0}}
\put(82,74){\makebox[0in][c]{$\psi_3$}}
\put(82,64){\makebox[0in][c]{(0,0,0,0)}}
\put(82,54){\makebox[0in][c]{0}}
\put(117,74){\makebox[0in][c]{$\psi_4$}}
\put(117,64){\makebox[0in][c]{(0,0,0,1)}}
\put(117,54){\makebox[0in][c]{0}}
\put(152,74){\makebox[0in][c]{$\phi_5$}}
\put(152,64){\makebox[0in][c]{(1,0,1,0)}}
\put(152,54){\makebox[0in][c]{0}}
\put(187,74){\makebox[0in][c]{$\phi_6$}}
\put(187,64){\makebox[0in][c]{(1,0,1,1)}}
\put(187,54){\makebox[0in][c]{0}}
\put(222,74){\makebox[0in][c]{$\phi_7$}}
\put(222,64){\makebox[0in][c]{(1,1,1,0)}}
\put(222,54){\makebox[0in][c]{0}}
\put(257,74){\makebox[0in][c]{$\phi_8$}}
\put(257,64){\makebox[0in][c]{(1,1,1,1)}}
\put(257,54){\makebox[0in][c]{0}}
\end{picture}
\end{center}
\vspace{10pt}
\begin{center}
Color key:

\begin{tabular}{c|c}
$j$&$j$th color\\\hline
$1$&black\\
$2$&red\\
$3$&green\\
$4$&orange\\
$5$&blue\\
$6$&purple\\
$7$&yellow\\
$8$&brown
\end{tabular}
\end{center}

Here the generating matrix for $e_8$ is
\[
G=\left[\begin{array}{cccccccc}
1&1&1&1&0&0&0&0\\
0&0&1&1&1&1&0&0\\
0&0&0&0&1&1&1&1\\
1&0&1&0&1&0&1&0
\end{array}\right].
\]
The node choice code has generating matrix
\[
H=\left[\begin{array}{cccccccc}
1&1&0&0&0&0&0&0\\
0&1&1&0&0&0&0&0\\
0&0&1&1&0&0&0&0\\
0&0&0&1&1&0&0&0\\
0&0&0&0&1&1&0&0\\
0&0&0&0&0&1&1&0\\
0&0&0&0&0&0&1&1
\end{array}\right],
\]
and we compute
\[
GH^T=\left[\begin{array}{ccccccc}
0&0&0&1&0&0&0\\
0&1&0&0&0&1&0\\
0&0&0&1&0&0&0\\
1&1&1&1&1&1&1
\end{array}\right].
\]
Note that the first and third rows are the same, but other than that, the rows are linearly independent.  Thus, node choice symmetries can produce $2^3=8$ different $s(\mu)$ sequences.  These correspond to the 8 bosons: note in the figure that the $s(\mu)$ sequences are listed at each vertex.  As long as the symmetry sends bosons to bosons, we cannot alter $s_1+s_3$, which is truly invariant.  But other non-trivial linear combinations of the $s_a$ can always be changed.  Hence, there are two distinct dashings on an $N=8$ valise Adinkra that are a quotient of $I^8$ by the code $e_8$.

The other dashing can be obtained by edge flipping any color, for instance $Q_8$ (brown):
\begin{center}
\begin{picture}(240,104)(-25,-20)
\put(0,0){\includegraphics[height=60pt]{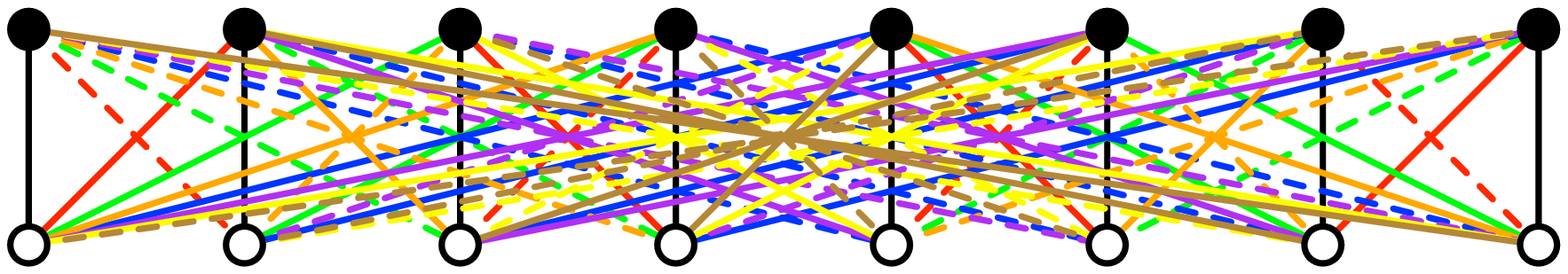}}
\put(-16,-20){\makebox[0in][r]{$s_1+s_3$}:}
\put(-16,54){\makebox[0in][r]{$s_1+s_3$}:}
\put(-16,-10){\makebox[0in][r]{$(s_1,s_2,s_3,s_4)$}:}
\put(-16,64){\makebox[0in][r]{$(s_1,s_2,s_3,s_4)$}:}
\put(12,0){\makebox[0in][c]{$\phi_1$}}
\put(12,-10){\makebox[0in][c]{(1,1,1,1)}}
\put(12,-20){\makebox[0in][c]{0}}
\put(47,0){\makebox[0in][c]{$\phi_2$}}
\put(47,-10){\makebox[0in][c]{(1,1,1,0)}}
\put(47,-20){\makebox[0in][c]{0}}
\put(82,0){\makebox[0in][c]{$\phi_3$}}
\put(82,-10){\makebox[0in][c]{(1,0,1,1)}}
\put(82,-20){\makebox[0in][c]{0}}
\put(117,0){\makebox[0in][c]{$\phi_4$}}
\put(117,-10){\makebox[0in][c]{(1,0,1,0)}}
\put(117,-20){\makebox[0in][c]{0}}
\put(152,0){\makebox[0in][c]{$\phi_5$}}
\put(152,-10){\makebox[0in][c]{(0,0,0,1)}}
\put(152,-20){\makebox[0in][c]{0}}
\put(187,0){\makebox[0in][c]{$\phi_6$}}
\put(187,-10){\makebox[0in][c]{(0,0,0,0)}}
\put(187,-20){\makebox[0in][c]{0}}
\put(222,0){\makebox[0in][c]{$\phi_7$}}
\put(222,-10){\makebox[0in][c]{(0,1,0,1)}}
\put(222,-20){\makebox[0in][c]{0}}
\put(257,0){\makebox[0in][c]{$\phi_8$}}
\put(257,-10){\makebox[0in][c]{(0,1,0,0)}}
\put(257,-20){\makebox[0in][c]{0}}
\put(12,74){\makebox[0in][c]{$\psi_1$}}
\put(12,64){\makebox[0in][c]{(0,1,1,0)}}
\put(12,54){\makebox[0in][c]{1}}
\put(47,74){\makebox[0in][c]{$\psi_2$}}
\put(47,64){\makebox[0in][c]{(0,1,1,1)}}
\put(47,54){\makebox[0in][c]{1}}
\put(82,74){\makebox[0in][c]{$\psi_3$}}
\put(82,64){\makebox[0in][c]{(0,0,1,0)}}
\put(82,54){\makebox[0in][c]{1}}
\put(117,74){\makebox[0in][c]{$\psi_4$}}
\put(117,64){\makebox[0in][c]{(0,0,1,1)}}
\put(117,54){\makebox[0in][c]{1}}
\put(152,74){\makebox[0in][c]{$\phi_5$}}
\put(152,64){\makebox[0in][c]{(1,0,0,0)}}
\put(152,54){\makebox[0in][c]{1}}
\put(187,74){\makebox[0in][c]{$\phi_6$}}
\put(187,64){\makebox[0in][c]{(1,0,0,1)}}
\put(187,54){\makebox[0in][c]{1}}
\put(222,74){\makebox[0in][c]{$\phi_7$}}
\put(222,64){\makebox[0in][c]{(1,1,0,0)}}
\put(222,54){\makebox[0in][c]{1}}
\put(257,74){\makebox[0in][c]{$\phi_8$}}
\put(257,64){\makebox[0in][c]{(1,1,0,1)}}
\put(257,54){\makebox[0in][c]{1}}
\end{picture}
\end{center}
\end{example}

\begin{example}
Suppose more generally we have any connected Valise Adinkra $A=I^N/C$.  The node choice code is the set $E$ of even weight words in $(\zt)^N$.  The matrix $H$ is then
\[
H=\left[\begin{array}{ccccccc}
1&1&0&0&\cdots&0&0\\
0&1&1&0&\cdots&0&0\\
0&0&1&1&\cdots&0&0\\
\vdots&&&&&&\vdots\\
0&0&0&0&\cdots&1&1
\end{array}\right].
\]
Let the $i$th row of $H$ be denoted $h_i$.

Consider the generating matrix $G$ for the code $C$.  Let $z$ be the length $N$ word $(1,\ldots,1)$.

The dimension $D$ of the span of the columns of $GH^T$ can be determined by doing Gauss--Jordan elimination.  This dimension is $k$ minus the number of rows of $0$s of the reduced row-echelon form of $GH^T$.  Thus, $D=k$ if and only if no non-trivial linear combination of the rows of $GH^T$ is zero.  Such a linear combination can be done on $G$, resulting in a non-trivial linear combination of the generating words, i.e., a codeword $w\in C$, with $wH^T=0$ (here $w$ is written as a row matrix).  But unless $w=0$ or $w=z$, there will be a row of $H$, $h_i$, so that
\[\sum_{j=1}^N w_j h_{i,j}\equiv 1\pmod{2}.\]
To see this, simply locate a column of $w$ where $w$ changes from $0$ to $1$ or vice versa.  So the only non-trivial codeword $w$ with $wH^T=0$ is $z$.

This demonstrates that if $z\not\in C$, then $D=k$, and therefore any $s(\mu)$ can be turned into $0$ by choosing $v_0$ appropriately.  In other words, if $z\not\in C$, then there is only one odd dashing up to node choice symmetries and vertex switches.

If $z\in C$, then since there is only one non-trivial $w$ with $\sum_{j=1}^N w_j h_{i,j} \equiv 1\pmod{2}$, there is only one row of $0$s in the reduced row-echelon form of $GH^T$.  Thus, $D=k-1$.  Choose the generating set for $C$ so that $z$ is one of the generating words, say, the first one.  Then $s_1(\mu)$ is not altered by any node choice symmetry (since all elements of $\ch$ are even, they do nothing to $z$), but the other $s_a$ for $a>1$ can be turned into $0$ by choosing $v_0$ appropriately.  In other words, if $z\in C$, then there are two odd dashings up to node choice symmetries and vertex switches.

Since $C$ is doubly even, a necessary condition for $z\in C$ is that $N$ is a multiple of 4.  And when $N$ is a multiple of 4, if $C$ is maximal, then $z\in C$, since otherwise, the span of $C$ and $z$ forms a larger doubly even code.

This is in accordance to the following observations.  First, valise representations of the $N$-extended super Poincar\'e algebra are in one-to-one correspondence with supermodules of the Clifford algebra $\Cl(0,N)$, which in turn are in one-to-one correspondence with representations of $\Cl(0,N+1)$\cite{rAT2}.  Second, when $N$ is a multiple of 4, there are exactly two isomorphism classes of such irreducible representations, and otherwise, there is just one such\cite{LM}.

\end{example}

\subsection{Dashing invariants}
The results of the previous section indicate that the sequence $(s_1,\ldots,s_k)$ does not in general work as an invariants for a dashing, because of node choice symmetries.  But the methods of the previous section also suggest a solution: find linear combinations of the $s_i$ that are invariant under every node choice symmetry.

\begin{algorithm}
Input: code generator matrix G and Node choice code generator matrix $H$

Output: A set of linear combinations of the $s_i$ that is invariant under node choice symmetries

Procedure:

Create an augmented matrix, consisting of $GH^T$ on the left, and a $k\times k$ identity matrix on the right.

Perform Gauss--Jordan elimination on this augmented matrix.

For each row of the resulting matrix where the left side has only zeros, we will create an invariant.  On such a row, look on the right side of the matrix.  For each column $i$ on the right with a 1, include a summand $s_i$.
\end{algorithm}

\begin{example}
Take the $N=6$ Example~\ref{ex:d62} above.  Here we have
\[
GH^T=\left[\begin{array}{ccc}
0&0&1\\
0&0&1
\end{array}\right],
\]
which we augment with an identity matrix, as demanded in the algorithm:
\[
\left[\begin{array}{ccc|cc}
0&0&1&1&0\\
0&0&1&0&1
\end{array}\right].
\]
Doing Gauss--Jordan elimination gives
\[
\left[\begin{array}{ccc|cc}
0&0&1&1&0\\
0&0&0&1&1
\end{array}\right],
\]
and there is one row which has $0$ on the left of the line: the second row.  The right side of this is
\[
\begin{array}{cc}
1&1,
\end{array}
\]
which indicates that $s_1+s_2$ is an invariant of both node symmetries and vertex switches.

\end{example}

\section{Other applications of cubical cohomology to Adinkras}
This paper has been about odd dashings.  But cubical cohomology has other applications to the mathematics of Adinkras.

\subsection{Bipartitions}
The existence and classification of bipartitions is analogous to that of odd dashings, but where the role of $w_2$ is replaced by $w_1$.  Just as a dashing gives rise to a 1-cochain, a partition of the vertices is equivalent to a 0-cochain $\mu$, where bosons are assigned 0, and fermions are assigned 1.  This is a bipartition if and only if every edge is incident with one boson and one fermion, i.e., if for every edge $e$, $\mu(\partial e)=1$.  This is equivalent to saying that $d\mu=\omega_1$.  Since $d\omega_1=0$, this is possible if and only if $w_1=0$ in $H^1(A;\zt)$.

This choice of $\mu$ is unique up to an element of $H^0(A;\zt)$.  If $A$ is connected, then there are only two choices, up to swapping all bosons and fermions (called a Klein flip in \cite{rA}).  If $A$ is not connected, then a Klein flip could be done independently in each connected component.

\subsection{Arrows}
If we define cohomology of an Adinkra not only with $\zt$-coefficients, but with $\bz$-coefficients, then an orientation (that is, drawing an arrow on every edge) on a cubical graph gives rise to a one-cochain $\mu$ in $C^1(A;\bz)$ in the following manner: for every oriented edge $e$ going from $v_p$ to $v_q$, we define $\mu(e)$ to be 1 if the arrow points from $v_p$ to $v_q$, and $-1$ otherwise.  Then the orientation is non-escheric if and only if $d\mu=0$.  Since $H^1(A;\bz)=0$, a non-escheric orientation $\mu$ can be described as $df$ for some 0-cochain $f$.  Such a cochain is an assignment of an integer to each vertex, which if we imagine is a placement of the vertices at different ``heights'' on the page.  Since $\mu$ takes its values in $\pm 1$, it must be that edges connect vertices of adjacent height.  This height assignment is unique up to adding an integer additive constant for each connected component of the Adinkra.

This height function provides what in particle physics is called engineering dimension or mass dimension.  Quantities in physics come with units of time, distance, mass, or combinations of these.  If we choose units so that the speed of light $c=1$ and Planck's constant $\hbar=1$, then units can be written as a power of the unit of mass.  This power is the engineering dimension.  Assuming the transformation rules do not involve constants that have dimension, the function that takes each field and returns twice the engineering dimension is a height assignment.

A non-cohomological proof of these facts, and an application of these ideas to classifying all non-escheric orientations, is given in \cite{r6-1}.

The escheric central charge example in \cite{rA} admits a similar 1-form with values in $\bz_4$.

\def\rasp{\leavevmode\raise.45ex\hbox{$\rhook$}}

\end{document}